\documentclass[12pt]{article}
\usepackage{a4,amsthm,amsmath,amsfonts,amssymb,cite}
\usepackage{booktabs}
\usepackage{hyperref}

\usepackage{amsfonts,graphicx,epsfig,epstopdf,algorithmic}
\ifpdf
  \DeclareGraphicsExtensions{.eps,.pdf,.png,.jpg}
\else
  \DeclareGraphicsExtensions{.eps}
\fi


\usepackage{tikz}
\usetikzlibrary{shapes, arrows, automata, positioning, fit, shadows, decorations.pathmorphing}
\usetikzlibrary{calc, backgrounds}

\usepackage{geometry}
\usepackage{enumitem}
\usepackage{tabularx,xspace,booktabs}

\newtheorem{theorem}{Theorem}[section]
\newtheorem{lemma}[theorem]{Lemma}

\newtheorem{corollary}[theorem]{Corollary}
\newtheorem{definition}[theorem]{Definition}

\providecommand{\mIs}[1]{\mI_{(#1)}}
\providecommand{\mnk}[1]{\mN_{(#1)}}
\providecommand{\mmk}[2]{\matbld{#1}_{(#2)}}
\providecommand{\vvk}[2]{\vecn{#1}_{(#2)}}
\providecommand{\vvd}[1]{\vecn{d}^{(#1)}}

\providecommand{\subscore}[4]{\textrm{#1}(#2,#3,#4)}

\usepackage{xcolor}


\newcommand{\gentropy}[1]{\ensuremath{I_{f}(G, #1)}}

\DeclareMathOperator*{\avg}{avg}

\newcommand{\pscf}[2]{\ensuremath{p_{f}(#1,#2)}}

\newcommand{\ourparag}[1]{\textbf{#1}.}

\usepackage[utf8]{inputenc}
\usepackage{amsfonts,amsmath,amssymb,array}


\newcommand{\eps}{\varepsilon}


\DeclareMathOperator{\diag}{diag}
\DeclareMathOperator{\tr}{tr}

\providecommand{\vecn}[1]{\ensuremath{\textbf{#1}}}

\providecommand{\vb}{\vecn{b}}

\providecommand{\ve}{\vecn{e}}

\providecommand{\vn}{\vecn{n}}

\providecommand{\vu}{\vecn{u}}
\providecommand{\vv}{\vecn{v}}

\providecommand{\vx}{\vecn{x}}
\providecommand{\vy}{\vecn{y}}


\newcommand{\bmat}[1]{ \begin{bmatrix} #1 \end{bmatrix} }

\providecommand{\matbld}[1]{\ensuremath{\textbf{#1}}}
\providecommand{\mA}{\matbld{A}}
\providecommand{\mB}{\matbld{B}}

\providecommand{\mI}{\matbld{I}}

\providecommand{\mM}{\matbld{M}}
\providecommand{\mN}{\matbld{N}}

\providecommand{\mV}{\matbld{V}}
\providecommand{\mW}{\matbld{W}}

\graphicspath{{.}{./figures/}}

\usepackage{csquotes}

\title{Subgraph centrality and walk-regularity\thanks{The work of these authors was supported in part by the Gordon \& Betty Moore Foundation's Data-Driven Discovery Initiative through Grant GBMF4560 to Blair D.~Sullivan.}}

\author{
    Eric Horton
    \thanks{Department of Computer Science, North Carolina State University, Raleigh, NC 27695, USA. Email:
    \texttt{ewhorton,kakloste,blair\_sullivan@ncsu.edu}.
}
\and Kyle Kloster\footnotemark[2]
\and Blair D. Sullivan\footnotemark[2]}

\usepackage{amsopn}

\begin{document}

\maketitle

\begin{abstract}
  Matrix-based centrality measures have enjoyed significant popularity in network analysis, in
no small part due to our ability to rigorously analyze their behavior as parameters vary.
Recent work has considered the relationship between subgraph centrality,
which is defined using the matrix exponential $f(x) = \exp(x)$, and the walk structure of a network.
In a walk-regular graph, the number of closed walks of each length must be the same for all nodes, implying uniform $f$-subgraph centralities for any $f$ (or maximum $f$-\emph{walk entropy}).
We consider when non--walk-regular graphs can achieve maximum entropy, calling such graphs \emph{entropic}.
For parameterized measures, we are also interested in which values of the parameter witness this
uniformity.
To date, only one entropic graph has been identified, with only two witnessing parameter values, raising the question of how many such graphs and parameters exist.
We resolve these questions by constructing infinite families of entropic graphs, as well as a family of witnessing parameters with a limit point at zero.\newline

\noindent\textit{MSC:} 05C50, 05C75, 15A16 \newline
\noindent\textit{Keywords:} centrality; graph entropy; walk-regularity; functions of matrices; network analysis

\end{abstract}

\section{Introduction}\label{sec:intro}
Evaluating the relative importance of nodes in a graph is a fundamental operation in network analysis, and the literature is full of well-studied approaches for quantifying node importance (typically called centrality measures)~\cite{estrada2012structure,newman2010networks}.
Functions of matrices are a natural candidate for such rankings~\cite{estrada2010network}; in particular the matrix resolvent~\cite{gleich2015pagerank,katz1953new,page1999pagerank} (Katz and PageRank centrality) and the matrix exponential~\cite{benzi2013total,benzi2014matrix,chung2007heat,estrada2005subgraph} (heat kernel and subgraph centrality) have been widely studied and used in practice.

Like many approaches to centrality, functions of matrices actually give rise to infinite families of specific centrality measures based on the value of some parameter in the definition.
For example, the ($\beta$-)\emph{subgraph centrality} of a node $i$ is given by the diagonal entry of the matrix exponential $\exp(\beta \mA)_{ii}$ for non-negative $\beta$~\cite{estrada2005subgraph}.
In general, there is no consensus on the ``best'' parameter value(s) for a centrality measure, and the effects of specific choices can be hard to characterize.
Recent work has considered how relative node rankings change as matrix-based centrality
parameters vary~\cite{benzi2015limiting,paton2017centrality}.
In this work, we study when subgraph centralities can assign identical scores to nodes that are structurally different in the underlying graph.

The notion of ``structural equivalence'' we consider is based on prior studies of the interplay between the uniformity of subgraph centrality (or \emph{walk entropy}~\cite{benzi2014note,estrada2014walk}) and the \emph{walk-regularity} of a graph.
To be more precise, consider a graph $G$ with adjacency matrix $\mA$.
We say $G$ is \emph{walk-regular} if for each $\ell \geq 0$, every node has the same number of closed walks of length $\ell$ (equivalently, $\mA^{\ell}$ has constant diagonal), and we say nodes $i$ and $j$ are in the same \emph{walk class} (structurally equivalent) if $\mA^{\ell}_{ii} = \mA^{\ell}_{jj}$.
Early studies suggested that walk-regularity might be completely characterized by attaining maximum walk entropy;
that is, it was conjectured that a graph is walk-regular if and only if there exists at least one $\beta_0$ such that $\exp(\beta_0 \mA)$ gives all nodes the same score (resulting in maximum walk entropy)~\cite{benzi2014note}.
Recent work~\cite{Kloster2018115} disrupted this line of research by presenting a single graph which is non--walk-regular yet has uniform $\beta$-subgraph centrality for a particular value of $\beta$.
We call such a non--walk-regular graph \emph{entropic}, and any value $\beta$ for which $G$ attains maximum walk entropy an \emph{entropic value} for $G$.

In this work, we resolve several outstanding questions regarding the interplay between
walk-regularity and centrality. We begin by exhibiting an infinite family of entropic graphs (Section~\ref{sec:cartesian-graph}).
Our construction proves that for each entropic value $\beta_0$ there are infinitely many graphs entropic with respect to $\beta_0$.
Interestingly, this result does not produce any \emph{new} entropic values $\beta_0$;
however, in Section~\ref{sec:beta-distribution}
we establish that the set of entropic values is not only infinite, but contains a limit point at 0.

We then consider the more general class of $f$-\emph{subgraph centralities} given by the
diagonal entries of $f(\beta\mA)$ for a parameter $\beta > 0$ and suitable function $f$ defined on the spectrum of $\mA$.
If a graph has uniform $f$-subgraph centrality for some parameter $\beta$,
we say $G$ is \emph{$f$-entropic}, the value $\beta$ is $f$-entropic with respect to $G$, and $f$ is an entropic function.
In Section~\ref{sec:infinite-tensor} we prove that there are infinitely many functions $f_i$ that are entropic with respect to at least one graph.

Finally, we consider the related question of when a subset of a graph's walk-classes
have the same $f$-subgraph centrality score for some parameter $\beta$;
when this occurs, we say the walk classes \emph{collide} at $f$,
or that $f$ \emph{induces a collision} at the walk-classes.
Note that a graph is $f$-entropic exactly when $f$ induces a collision at all of the graph's walk-classes.
We present a sufficient condition for determining that a set of walk-classes collide under some function $f$,
and a sufficient condition for concluding that a set of walk-classes do not collide under \emph{any} suitable function $f$.
These sufficient conditions are practical to compute on modest-sized graphs.

\section{Background}\label{sec:prelims}\label{sec:notation}
We consider only simple, loopless, unweighted, undirected graphs.
Given a graph $G = (V,E)$, we label its nodes as $1, 2, ..., n = |V|$ and denote its adjacency matrix by $\mA_{G}$, or $\mA$ when context is clear.
$K_c$ denotes the complete graph on $c$ nodes.
Uppercase bold letters indicate matrices, $\mI$ is reserved for the square identity matrix, and
we write $\mI_{(n)}$ to indicate dimensions $n\times n$ if they are not clear from context.
The entry in the $i$th row and $j$th column of $\mA$ is denoted $\mA_{ij}$.
The $j$th column of $\mA$ is $\mA(:,j)$, and the subvector of $\mA(:,j)$
with row indices in set $S$ is $\mA(S,j)$.

Vectors are denoted by lowercase bold letters, e.g. $\vv$, and we write $\vvk{v}{\ell}$ to indicate length $\ell$. We write $\ve_j$ for the $j$th column of the identity matrix and $\ve$ for the vector of all 1s. To refer to entry $i$ of a vector we write $\vv(i)$, or $\vv_i$ if it is not ambiguous.

In the rest of this section we provide background on walk-regularity and walk-classes, graph entropy and centrality measures, as well as our definitions for entropic graphs, functions, and parameters

\ourparag{Walk-classes and walk-regularity}
We call a walk of length $\ell$ an \emph{$\ell$-walk}.
A graph $G$ is \emph{walk-regular} if and only if for each $\ell \geq 0$, each node in $G$ is incident to the same number of closed $\ell$-walks.
Two nodes $u, v$ are in the same \emph{walk-class}
if and only if for every $\ell \geq 0$ they are incident to the same number of closed $\ell$-walks.
This is equivalent to having $(\mA^\ell)_{uu} = (\mA^\ell)_{vv}$ for $\ell \geq 0$;
by the Cayley-Hamilton theorem, it suffices to consider only the $m$ values $\ell = 0, \cdots, m-1$ where $m\leq n$ is the degree of the minimal polynomial of $\mA$.
We remark that $G$ is walk-regular if and only if it has exactly one walk-class.

To analyze the walk-classes of a graph $G$ with adjacency matrix $\mA$, we define the \emph{walk matrix} as follows.
First, define the vector $\vvd{\ell}_{\mA}$ entrywise by $\vvd{\ell}_{\mA,j} = (\mA^\ell)_{jj}$ for each $j=1,\cdots,n$ and every $\ell \geq 0$.
When the context is clear, we use $\vvd{\ell} = \vvd{\ell}_{\mA}$.
Because we consider only loopless graphs, $\mA^1$ is constant diagonal, so the term $\vvd{1}$ provides no distinguishing information about the nodes.
The walk matrix $\mW_{\mA}$ is then
$\mW_{\mA} = \bmat{ \vvd{2} & \cdots & \vvd{n-1} }$, which we denote by $\mW$ if it is not ambiguous.

\noindent Given a square matrix $\mM \in \mathbb{R}^{n \times n}$, we say $\mM$ is \emph{constant diagonal} if every entry $\mM_{jj}$ is the same value.
A graph $G$ is walk-regular if and only if for each integer $\ell \geq 0$, the matrix $\mA^{\ell}$ is constant diagonal~\cite{godsil2013algebraic}.

\ourparag{Centrality, entropy, and functions of matrices}
If a function $f(x)$ with power series $\sum_{k=0}^{\infty} c_k x^k$ is defined on the spectrum of $\mA$, we can express
\[
    f(\mA) = \sum_{k=0}^{\infty} c_k \mA^k.
\]
Note that all nodes in the same walk-class will have the same diagonal value $f(\mA)_{ii}$,
and for a walk-regular graph, $f(\mA)$ is constant diagonal for any $f$ defined on $\mA$.
For more detail on the conditions $f(x)$ must satisfy to be defined on $\mA$, see~\cite{higham2008functions}.

For any function $f$ defined on the spectrum of $\mA$, the $f$-\emph{subgraph centrality} of node $j$ is given by $f(\mA)_{jj}$.
For $f(x) = \exp(x)$, this is simply called \emph{subgraph centrality}~\cite{estrada2005subgraph}.
An $f$-subgraph centrality and parameter $\beta$ define a probability distribution $p_f(\beta)$ on $V(G)$ by normalizing $\pscf{j}{\beta} = f(\beta\mA)_{jj} / \tr( f(\beta\mA) )$.

Dehmer introduced the concept of \emph{graph entropy} to study uniformity of various graph structures~\cite{dehmer2008information}.
Given any probability distribution $p:V(G)\rightarrow [0,1]$ on the nodes of a graph $G$, the corresponding \emph{graph entropy} is defined by $I_p(G) = - \sum_{j=1}^n \left( p(j) \cdot \log p(j) \right)$.
Graph entropies take values in $[0, \log n]$ and attain $\log n$ if and only if the distribution $p$ is uniform.
Thus, $G$ has maximum graph entropy with respect to $p_f$ exactly when $f(\mA)$ is constant diagonal.

For a fixed function $f(x)$, we study a family of associated graph entropies given by $f(\beta x)$
for varying $\beta$.
We use $\pscf{j}{\beta}$ to denote the probability distribution that arises from the centrality values $f(\beta \mA)_{jj}$, and we use $\gentropy{\beta}$ to denote the corresponding graph entropy.

The \emph{walk entropy} is defined in terms of subgraph centrality for a parameter $\beta$:
\begin{align}\label{eqn:walk-entropy}
  S^V(G, \beta) = -\sum_{j=1}^n \left( \frac{\exp(\beta\mA)_{jj}}{\tr (\exp(\beta\mA))}  \log \frac{\exp(\beta\mA)_{jj}}{\tr (\exp(\beta\mA))} \right).
\end{align}
A closer analysis of walk entropy was initiated in~\cite{estrada2014walk}, where it was conjectured that a graph is walk-regular if and only if its walk entropy is maximized for all $\beta \geq 0$.
A stronger form of this conjecture was proven in~\cite{benzi2014note}, namely that walk-regularity follows if walk entropy is maximized for all $\beta \in I$, for any set $I\subset \mathbb{R}$ with a limit point.
It was further conjectured that walk-regularity follows if there exists even a single value $\beta > 0$ such that walk entropy is maximized, but~\cite{Kloster2018115} exhibited a counterexample which we refer to as $G(4,5)$;
we introduce this notation in Section~\ref{sec:beta-distribution}.

\ourparag{Entropic graphs, functions, and values}
In this paper we restrict our attention to functions $f$ that have power-series representations with coefficients that are all positive.
Additionally, we assume that the power series has positive radius of convergence.
The authors in~\cite{benzi2015limiting} explored this exact setting;
previous work had considered a slight variant, allowing some coefficients to be nonnegative~\cite{estrada2010network, rodriguez2007functional}.
\begin{definition}\label{def:positive-power}
  A function $f(x)$ is a \emph{positive power-series coefficient} (PPSC) function if it has a power series $f(x) = \sum_{k=0}^{\infty} c_k x^k$ with $c_k > 0$ $\forall k$.
\end{definition}
This class of functions includes functions associated with several popular centrality measures.
In particular, the matrix resolvent $f(\beta \mA) = (\mI - \beta\mA)^{-1}$~(Katz centrality~\cite{katz1953new} and PageRank~\cite{gleich2015pagerank,page1999pagerank})
and the matrix exponential $f(\beta x) = \exp(\beta \mA)$~(subgraph centrality~\cite{estrada2005subgraph}, total subgraph communicability~\cite{benzi2013total}, and heat kernel centrality~\cite{chung2007heat}) have been widely studied.

\begin{definition}\label{def:entropic}
  A graph $G$ is \emph{$f$-entropic} if $G$ is connected and non--walk-regular, and
  there exists a PPSC function, $f$, such that $f(\beta_0\mA)$ is constant diagonal for some $\beta_0 > 0$.
  We say $G$ is \emph{$f$-entropic with respect to $\beta_0$}, $f$ is \emph{entropic on $G$}, and we call $\beta_0$ an \emph{$f$-entropic value}.
  If $f(x) = \exp(x)$, we simply say $G$ is \emph{entropic} and $\beta_0$ is an \emph{entropic value}.
\end{definition}

Conversely, we call a value $\beta_0$ for which no entropic graph exists \emph{sub-entropic}.
That is, if $\beta_0$ is sub-entropic, then for \emph{any} graph's adjacency matrix $\mA$, $\exp(\beta_0 \mA)$ is constant diagonal if and only if the graph is walk-regular.
When using $f(x)\neq \exp(x)$, we say $\beta_0$ is $f$-sub-entropic.
Pursuit of a concrete, sub-entropic value $\beta_0$ (and in particular the conjectured value $\beta_0 = 1$) has motivated much of the recent literature on the topic.
See Conjecture 3 in~\cite{estrada2013discriminant} and Conjecture 3.1 in~\cite{benzi2014note}.

\section{An infinite family of entropic graphs}\label{sec:cartesian-graph}
Here we prove that there are infinitely many entropic graphs.
We construct the graphs via the Cartesian product.
The \emph{Cartesian product} of two graphs, $G$ and $H$, is denoted $G \Box H$ and satisfies the algebraic identity $\mA_{G \Box H} = \mA_G \otimes \mI + \mI \otimes \mA_H.$
\begin{lemma}\label{lem:cartesian-const-diag}
    Let $G, H$ be graphs and $\beta_0 > 0$ a value such that $\exp( \beta_0 \mA_{G})$ and $\exp( \beta_0 \mA_{H})$ are constant diagonal. Then $\exp(\beta_0 \mA_{G\Box H})$ is constant diagonal.
\end{lemma}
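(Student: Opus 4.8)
The plan is to exploit the algebraic identity $\mA_{G \Box H} = \mA_G \otimes \mI + \mI \otimes \mA_H$ together with the observation that its two summands commute. First I would verify commutativity directly from the mixed-product property of the Kronecker product, $(\mP \otimes \mQ)(\mR \otimes \mS) = (\mP\mR) \otimes (\mQ\mS)$, which yields
\[
    (\mA_G \otimes \mI)(\mI \otimes \mA_H) = \mA_G \otimes \mA_H = (\mI \otimes \mA_H)(\mA_G \otimes \mI).
\]
Because the summands commute, the standard identity $\exp(\mX + \mY) = \exp(\mX)\exp(\mY)$ for commuting $\mX,\mY$ applies and gives
\[
    \exp(\beta_0 \mA_{G \Box H}) = \exp(\beta_0\, \mA_G \otimes \mI)\,\exp(\beta_0\, \mI \otimes \mA_H).
\]

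Next I would simplify each factor using $\exp(\mM \otimes \mI) = \exp(\mM) \otimes \mI$ and $\exp(\mI \otimes \mN) = \mI \otimes \exp(\mN)$, which follow term-by-term from $(\mM \otimes \mI)^k = \mM^k \otimes \mI$ and the power-series definition of the matrix exponential. Substituting these and applying the mixed-product property once more collapses the product into a single Kronecker product:
\[
    \exp(\beta_0 \mA_{G \Box H}) = \exp(\beta_0 \mA_G) \otimes \exp(\beta_0 \mA_H).
\]

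Finally, I would read off the diagonal. For any square matrices $\mP, \mQ$, the diagonal entry of $\mP \otimes \mQ$ indexed by the pair $(i,j)$ equals $\mP_{ii}\,\mQ_{jj}$. By hypothesis $\exp(\beta_0 \mA_G)$ has a constant diagonal value $a$ and $\exp(\beta_0 \mA_H)$ has a constant diagonal value $b$, so every diagonal entry of the product equals $ab$; hence $\exp(\beta_0 \mA_{G \Box H})$ is constant diagonal, as claimed.

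The argument is a short chain of standard Kronecker-product identities, so there is no deep obstacle. The one step I would be careful not to gloss over is the commutativity check: the splitting $\exp(\mX + \mY) = \exp(\mX)\exp(\mY)$ fails for non-commuting matrices, and it is precisely the mixed-product property that guarantees the two summands commute. I would therefore state that verification explicitly rather than treat it as obvious, since it is the hinge of the whole proof.
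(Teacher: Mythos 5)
Your proposal is correct and follows essentially the same route as the paper's proof: both use the identity $\mA_{G \Box H} = \mA_G \otimes \mI + \mI \otimes \mA_H$, the splitting of the exponential over commuting summands, and the collapse to $\exp(\beta_0 \mA_G) \otimes \exp(\beta_0 \mA_H)$, whose diagonal is the product of constant diagonals. The only difference is that you spell out the commutativity check and the diagonal-entry indexing, which the paper leaves implicit.
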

\begin{proof}
  Recall that for any matrices $\mB_1$ and $\mB_2$ that commute, we have $\exp(\mB_1 + \mB_2) = \exp(\mB_1)\exp(\mB_2)$.
  This allows us to write
  \[
    \exp( \beta \mA_{G \Box H}) \hspace{4pt} = \hspace{4pt} \exp( \beta\mA_G \otimes \mI  +  \beta\mI \otimes \mA_{H}) \hspace{4pt} = \hspace{4pt} \exp(\beta \mA_G)\otimes \exp(\beta \mA_H).
  \]
  The second equality follows because $\exp(\mB \otimes \mI) = \exp(\mB)\otimes \mI$ holds for any square $\mB$.
  Thus, if $\exp(\beta \mA_G)$ and $\exp(\beta \mA_H)$ have constant diagonal, so does $\exp( \mA_{G \Box H})$.
\end{proof}
\begin{lemma}\label{lem:cart-non-walk-reg}
  For graphs $G, H$ not both walk-regular, $G\Box H$ is not walk-regular.
\end{lemma}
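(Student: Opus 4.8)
The plan is to track the diagonal entries of the powers $\mA_{G\Box H}^\ell$ directly, using the tensor decomposition of the Cartesian product. Since $\mA_G \otimes \mI$ and $\mI \otimes \mA_H$ commute, the binomial theorem gives
\[
  \mA_{G\Box H}^\ell = (\mA_G \otimes \mI + \mI \otimes \mA_H)^\ell = \sum_{k=0}^{\ell} \binom{\ell}{k}\, \mA_G^k \otimes \mA_H^{\ell-k},
\]
so that the diagonal entry indexed by the node $(u,v)$ of $G\Box H$ equals $\sum_{k=0}^{\ell} \binom{\ell}{k} (\mA_G^k)_{uu} (\mA_H^{\ell-k})_{vv}$. (Combinatorially this just records that a closed $\ell$-walk at $(u,v)$ splits into a closed walk in $G$ on the steps that move the first coordinate and a closed walk in $H$ on the remaining steps.) The goal is then to exhibit a single power $\ell$ together with two nodes of $G\Box H$ whose diagonal entries differ.

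Assume without loss of generality that $G$ is not walk-regular, and let $\ell_0$ be the smallest exponent for which $\mA_G^{\ell_0}$ fails to have constant diagonal; pick nodes $u_1, u_2$ of $G$ with $(\mA_G^{\ell_0})_{u_1u_1} \neq (\mA_G^{\ell_0})_{u_2u_2}$, and fix any node $v$ of $H$. I would then compare the diagonal entries of $\mA_{G\Box H}^{\ell_0}$ at $(u_1,v)$ and $(u_2,v)$, forming the difference
\[
  \sum_{k=0}^{\ell_0} \binom{\ell_0}{k}\Big[(\mA_G^k)_{u_1u_1} - (\mA_G^k)_{u_2u_2}\Big] (\mA_H^{\ell_0-k})_{vv}.
\]

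The key observation — and the one step that requires care — is that the minimality of $\ell_0$ collapses this sum to a single term. For every $k < \ell_0$ the matrix $\mA_G^k$ has constant diagonal, so the bracketed factor vanishes; only the $k = \ell_0$ term survives, and there $\mA_H^{\ell_0 - k} = \mA_H^0 = \mI$ contributes the factor $(\mI)_{vv} = 1$. Hence the difference equals $(\mA_G^{\ell_0})_{u_1u_1} - (\mA_G^{\ell_0})_{u_2u_2} \neq 0$, so $(u_1,v)$ and $(u_2,v)$ lie in different walk-classes and $G\Box H$ is not walk-regular. The argument is symmetric in $G$ and $H$, covering the case where $H$ rather than $G$ fails walk-regularity. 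I expect the only genuine subtlety is justifying the vanishing of the lower-order terms, which is exactly why selecting the \emph{minimal} witnessing exponent $\ell_0$ (rather than an arbitrary one) is essential: for an arbitrary exponent the cross terms need not cancel, and the two diagonal entries could in principle coincide.
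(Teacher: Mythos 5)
Your proof is correct, and it takes a genuinely different route from the paper's. The paper invokes Benzi's Theorem~2.1 from~\cite{benzi2014note} to obtain a $\beta_0>0$ with $\exp(\beta_0\mA_G)$ not constant diagonal, then uses the identity $\exp(\beta_0\mA_{G\Box H})=\exp(\beta_0\mA_G)\otimes\exp(\beta_0\mA_H)$ to conclude the product is not constant diagonal (implicitly using that the diagonal entries of $\exp(\beta_0\mA_H)$ are positive). Your argument instead works directly with walk counts: the commuting binomial expansion $\mA_{G\Box H}^{\ell}=\sum_k\binom{\ell}{k}\mA_G^{k}\otimes\mA_H^{\ell-k}$ plus the choice of the \emph{minimal} witnessing exponent $\ell_0$ collapses the difference of diagonal entries to the single surviving term $k=\ell_0$, where $\mA_H^{0}=\mI$ contributes a factor of~$1$. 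This is more elementary and self-contained --- it avoids any appeal to the matrix exponential or to Benzi's result, and it pinpoints an explicit power $\ell_0$ at which $\mA_{G\Box H}^{\ell_0}$ fails to be constant diagonal (in fact showing that $G\Box H$ inherits a walk-class distinction at the very first length where $G$ exhibits one). What the paper's route buys is brevity and reuse of machinery already set up for Lemma~\ref{lem:cartesian-const-diag}; what yours buys is independence from external theorems and a sharper, quantitative conclusion. Your identification of the minimality of $\ell_0$ as the essential point is exactly right: for an arbitrary witnessing exponent the lower-order cross terms need not cancel.
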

\begin{proof}
    Without loss of generality, let $G$ be not walk-regular.
    Then by a result of Benzi (~\cite{benzi2014note},~Theorem~2.1), there exists some $\beta_0 > 0$ such that $\exp(\beta_0 \mA_G)$ is not constant diagonal.
    Thus, for any graph $H$, we have that $\exp(\beta_0 \mA_G)\otimes\exp(\beta_0 \mA_H)$ is not constant diagonal, which implies $\exp(\beta_0 \mA_{G\Box H})$ is not constant diagonal, so $G \Box H$ is not walk-regular.
\end{proof}
\begin{theorem}\label{thm:cartesian-entropic}
    Let $G, H$ be graphs and $\beta_0 > 0$ such that $\exp( \beta_0 \mA_{G})$ and $\exp( \beta_0 \mA_{H})$ are constant diagonal.
    If $G$ and $H$ are connected and at least one is entropic, then $G \Box H$ is entropic with respect to $\beta_0$.
\end{theorem}
\begin{proof}
  Since we assume $\exp(\beta_0 \mA_{G})$ and $\exp(\beta_0 \mA_H)$ are both constant diagonal for some $\beta_0 > 0$, then by Lemma~\ref{lem:cartesian-const-diag}, $\exp(\beta_0 \mA_{G\Box H})$ is constant diagonal too. If we show $G \Box H$ is connected and not walk-regular, then the result follows.

  By assumption, at least one of $G, H$ is entropic and therefore not walk-regular, so by Lemma~\ref{lem:cart-non-walk-reg}, $G \Box H$ is not walk-regular.
  Finally, a result of Chiue and Shieh~(~\cite{chiue1999connectivity}, Lemma 3) implies that $G \Box H$ is connected if and only if both $G$ and $H$ are connected, so we know $G \Box H$ is connected.
\end{proof}
\begin{corollary}\label{cor:cartesian-infinite}
    Let $G$ be a graph entropic with respect to $\beta_0 > 0$, and let $H$ be any connected, walk-regular graph. Then any finite product $H \Box (H \Box ( \cdots (H \Box G)))$ is entropic with respect to $\beta_0$.
\end{corollary}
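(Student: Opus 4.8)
The plan is to prove this by induction on the number of copies of $H$ in the iterated product, using Theorem~\ref{thm:cartesian-entropic} as the inductive step. The guiding observation is that ``entropic with respect to $\beta_0$'' packages together precisely the three hypotheses that Theorem~\ref{thm:cartesian-entropic} demands of one of its factors: by Definition~\ref{def:entropic}, an entropic graph is connected, is non--walk-regular, and has $\exp(\beta_0\mA)$ constant diagonal. So each time we prepend another factor of $H$, the product built so far can serve as the entropic factor in a fresh application of the theorem, and the output is again entropic—exactly the invariant needed to keep the induction running.

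First I would fix notation for the iterated product: set $P_1 = G$ and $P_{k+1} = H \Box P_k$, so $P_{k+1}$ is the product with $k$ copies of $H$. I want to show every $P_k$ is entropic with respect to $\beta_0$. The base case is immediate, since $P_1 = G$ is entropic with respect to $\beta_0$ by hypothesis. I would also record the single structural fact about $H$ that the induction reuses: because $H$ is walk-regular, $\mA_H^{\ell}$ is constant diagonal for every $\ell \geq 0$, so $\exp(\beta_0 \mA_H) = \sum_{\ell \geq 0} \beta_0^{\ell}\, \mA_H^{\ell}/\ell!$ is a convergent sum of constant-diagonal matrices and is therefore itself constant diagonal. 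Together with the assumption that $H$ is connected, this supplies both properties of $H$ required by Theorem~\ref{thm:cartesian-entropic} at the specific value $\beta_0$.

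For the inductive step I would assume $P_k$ is entropic with respect to $\beta_0$, which by Definition~\ref{def:entropic} means $P_k$ is connected and $\exp(\beta_0 \mA_{P_k})$ is constant diagonal. I then check the hypotheses of Theorem~\ref{thm:cartesian-entropic} for the pair $(H, P_k)$: both factors are connected; $\exp(\beta_0 \mA_H)$ and $\exp(\beta_0 \mA_{P_k})$ are both constant diagonal; and at least one factor, namely $P_k$, is entropic. The theorem then yields that $P_{k+1} = H \Box P_k$ is entropic with respect to $\beta_0$, closing the induction and giving the claim for every finite product.

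I do not expect a genuine obstacle here; the entire difficulty is bookkeeping, namely ensuring that at each stage the accumulated product simultaneously remains connected, keeps a constant-diagonal $\beta_0$-exponential, and stays non--walk-regular—which is exactly what the word \emph{entropic} certifies. The only point that warrants a sentence of care is the constant-diagonal exponential of $H$ at the particular value $\beta_0$, which I would handle through walk-regularity as above; alternatively, Lemma~\ref{lem:cartesian-const-diag} already guarantees that a product of two graphs with constant-diagonal $\beta_0$-exponentials again has one, so that lemma could be cited in place of the explicit power-series argument.
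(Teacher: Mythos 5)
Your proof is correct and is exactly the argument the paper intends: the corollary is stated without proof as an immediate consequence of Theorem~\ref{thm:cartesian-entropic}, obtained by iterating it, and your induction (together with the observation that walk-regularity of $H$ gives a constant-diagonal $\exp(\beta_0\mA_H)$ at every $\beta_0$) fills in that routine bookkeeping correctly.
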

In particular, we observe that the entropic graph $G(4,5)$ gives an infinite family of graphs entropic with respect to two values of $\beta$, for any connected, walk-regular graph $H$.
We remark that all graphs in this family are entropic with respect to the same two values $\beta$ for which $G(4,5)$ is entropic.

\section{The distribution of entropic values}\label{sec:beta-distribution}

We now consider the set of all entropic values.
By a result of Kloster et al.~(~\cite{Kloster2018115}, Theorem~3), this set is at most countable; one important question is whether it is actually finite.
Although Corollary~\ref{cor:cartesian-infinite} exhibits an infinite family of entropic graphs, they all share the same two entropic values. The distribution of entropic values is also of interest; in particular, we ask whether there are intervals where no values are entropic; we say an interval $(a,b) \subset \mathbb{R}$ is a \emph{sub-entropic interval} if no $\beta \in (a,b)$ is entropic for any graph.
Since subgraph centrality rankings converge to degree-rankings as $\beta$ converges to 0~\cite{benzi2015limiting}, walk-classes with different node degrees are distinguishable;
this naturally leads to the question of whether there is a sub-entropic interval near zero.

In this section, we construct an infinite sequence of entropic $\beta$ that has 0 as a limit point,
showing that no interval of the form $(0, \eps)$ is sub-entropic;
This result is the first to show that there are infinitely many entropic values, although it leaves open the question of whether the set of entropic values is dense anywhere on the number line, and whether a sub-entropic interval might exist away from 0.

\subsection{Entropic values accumulate at zero}\label{sec:infinite-betas}
Here we prove that no $\eps > 0$ exists such that $(0,\eps)$ is a sub-entropic interval for $\exp(\beta x)$.
We proceed by constructing a sequence of entropic values $\beta_j$ that converges to 0.
Our sequence depends on a family of graphs which generalizes the 24-node graph in~\cite{Kloster2018115}.
In the remainder of this section, we describe the graph family, derive its eigendecomposition, and use the eigendecomposition to construct the sequence of entropic values.

\subsubsection*{The graph class $G(c,m)$}
We define the graph $G(c,m)$ as follows. Given $m$ cliques of size $c$ and an independent set of size $c$, create a perfect matching from the independent set to each clique.
This results in a graph with $c(m+1)$ nodes and two distinct walk-classes---one formed by the nodes in the independent set, the other by the nodes in the cliques.
The 24-node graph in~\cite{Kloster2018115} is $G(4,5)$.
See Figure~\ref{fig:generalized-kral} for a visualization of $G(c,m)$ and its adjacency matrix.

\begin{figure}
      \centering
      \resizebox{0.7\linewidth}{!}{\definecolor{mygrey}{RGB}{180, 180, 180}
\definecolor{myblue}{RGB}{40, 40, 200}

%
\makeatletter
\newcommand{\gettikzxy}[3]{%
  \tikz@scan@one@point\pgfutil@firstofone#1\relax
  \edef#2{\the\pgf@x}%
  \edef#3{\the\pgf@y}%
}
\makeatother

\begin{tikzpicture}

\tikzstyle{node_style} = [state, fill=white, drop shadow, minimum width=0.05cm]
\tikzstyle{ind_node_style} = [state, fill=mygrey, drop shadow, minimum width=0.05cm]
\tikzstyle{invisi_node_style} = [state, draw=white, fill=white, minimum width=0.05cm]
\tikzstyle{outside_node_style} = [state, fill=white, drop shadow, minimum width=0.05cm, dashed]

\tikzstyle{label_node_style} = [fill=white]

\tikzstyle{edge_style} = [line width=2pt]
\tikzstyle{path_style} = [line width=2pt, decorate, decoration={snake}]
\tikzstyle{arrow_style} = [draw, fill=white, single arrow, single arrow head indent=1ex, minimum size=1cm, drop shadow]

\def\labelheight{0.8}
\def\distancebetweennodes{ sqrt(\xdist*\xdist + \ydist*\ydist) }
\def\xdist{2.0}
\def\ydist{0.0}

\def\ellipsisangle{atan(\ydist/\xdist)}

\node[ind_node_style] (b1) {$\textbf{1}$};
\node[ind_node_style] (b2) at ([xshift=\distancebetweennodes*1cm]b1) {$\textbf{2}$};
\node[rotate=\ellipsisangle] (bdots) at ([xshift=\distancebetweennodes*1cm]b2) {{\huge $\cdots$}};
\node[ind_node_style] (b3) at ([xshift=\distancebetweennodes*1cm]bdots) {$\textbf{c}$};

\node (c21) [node_style, above left = 3cm and 1.5cm of b1] {$\textbf{1}$};
\node[node_style] (c22) at ([xshift=\distancebetweennodes*1cm]c21) {$\textbf{2}$};
\node[rotate=\ellipsisangle] (c2dots) at ([xshift=\distancebetweennodes*1cm]c22) {{\huge $\cdots$}};
\node[node_style] (c23) at ([xshift=\distancebetweennodes*1cm]c2dots) {$\textbf{c}$};
\node (c2dummy) at ($(c22)!0.5!(c2dots)$) {};
\node (label2) [label_node_style, above = \labelheight*1cm of c2dummy] {\textcolor{myblue}{{\Large $C_2$}}};

\node[node_style] (c13) at ([xshift=-\distancebetweennodes*1.5cm]c21) {$\textbf{c}$};
\node[invisi_node_style] (c1dots) at ([shift=({-\xdist*1cm,-\ydist*1cm})]c13) {};
\node[rotate=\ellipsisangle] at (c1dots) {{\huge $\cdots$}};
\node[node_style] (c12) at ([shift=({-\xdist*1cm,-\ydist*1cm})]c1dots) {$\textbf{2}$};
\node[node_style] (c11) at ([shift=({-\xdist*1cm,-\ydist*1cm})]c12) {$\textbf{1}$};
%
\node (c1dummy) at ($(c12)!0.5!(c1dots)$) {};
\node (label1) [label_node_style, above = \labelheight*1cm of c1dummy] {\textcolor{myblue}{{\Large $C_1$}}};

\node[node_style] (c31) at ([xshift=\distancebetweennodes*4.4cm]c23) {$\textbf{1}$};
\node[node_style] (c32) at ([shift=({\xdist*1cm,-\ydist*1cm})]c31) {$\textbf{2}$};
\node[invisi_node_style] (c3dots) at ([shift=({\xdist*1cm,-\ydist*1cm})]c32) {};
\node[rotate=-\ellipsisangle] at (c3dots) {{\huge $\cdots$}};
\node[node_style] (c33) at ([shift=({\xdist*1cm,-\ydist*1cm})]c3dots) {$\textbf{c}$};
\node (c3dummy) at ($(c32)!0.5!(c3dots)$) {};
\node (label3) [label_node_style, above = \labelheight*1cm of c3dummy] {\textcolor{myblue}{{\Large $C_m$}}};

\node (ellipses) at ($(c23)!0.5!(c31)$) {{\Huge $\cdots$}};


\draw (b1) edge[edge_style] (c11);
\draw (b2) edge[edge_style] (c12);
\draw (b3) edge[edge_style] (c13);

\draw (b1) edge[edge_style] (c21);
\draw (b2) edge[edge_style] (c22);
\draw (b3) edge[edge_style] (c23);

\draw (b1) edge[edge_style] (c31);
\draw (b2) edge[edge_style] (c32);
\draw (b3) edge[edge_style] (c33);

\def\edgebend{30}

\draw (c11) edge[edge_style] (c12);
\draw (c11) edge[edge_style, bend left = \edgebend] (c13);
\draw (c12) edge[edge_style, bend left = \edgebend] (c13);

\draw (c21) edge[edge_style] (c22);
\draw (c21) edge[edge_style, bend left = \edgebend] (c23);
\draw (c22) edge[edge_style, bend left = \edgebend] (c23);

\draw (c31) edge[edge_style] (c32);
\draw (c31) edge[edge_style, bend left = \edgebend] (c33);
\draw (c32) edge[edge_style, bend left = \edgebend] (c33);

\end{tikzpicture}}%
      \vspace{10pt}
      \scalebox{0.85}{
          $ \mA_{G(c,m)} = \left(
                                  \begin{array}{cccccc}
                                      0 & \mIs{c} & \hdots &  \mIs{c} \\
                                      \mIs{c} & \mA_{K_c} & 0 & 0 \\
                                      \vdots & 0 & \ddots & 0  \\
                                      \mIs{c} & 0 & 0 & \mA_{K_c} \\
                                  \end{array}
                            \right)
                        = \left(
                            \begin{array}{cc}
                                0 & \vvk{e}{m}^T \otimes \mIs{c} \\
                                \vvk{e}{m} \otimes \mIs{c} & \mIs{m} \otimes \mA_{K_c}
                            \end{array}
                          \right)$
      }%
    \caption{\label{fig:generalized-kral}
        \emph{(Top)} The graph $G(c,m)$ is composed of $m$ copies of the complete graph $K_c$ (indicated by the labels $C_j$). Each clique is connected by a perfect matching to the independent set of size $c$ indicated by the gray nodes at the bottom.
        \emph{(Bottom)} The adjacency matrix for the graph $G(c,m)$. Expressing $\mA_{G(c,m)}$ in terms of Kronecker products with $\ve$ and $\mA_{K_c}$ makes it easier to verify that the vectors in Table~\ref{tab:eigenvectors} are eigenvectors of $\mA_{G(c,m)}$.
    }
\end{figure}

The proof of our main result relies on having an explicit eigendecomposition $(\lambda_k, \vv_k)$ and applying the identity
\[
    \exp(\beta \mA) = \sum_{k=1}^n \exp( \beta \lambda_k ) \vv_k \vv_k^T,
\]
so next we derive an eigendecomposition for $\mA_{G(c,m)}$.
As a first step, it will be convenient to look at the eigendecomposition of a related matrix.

\paragraph{Eigendecomposition of a clique}
Since $\mA_{K_c} = \vvk{e}{c}\vvk{e}{c}^T - \mIs{c}$,
$\mA_{K_c}$ has the same eigendecomposition as $\vvk{e}{c}\vvk{e}{c}^T$ but with eigenvalues shifted by $-1$.
Next we explicitly derive an orthonormal basis for the nullspace of $\ve\ve^T$.

Let $\mmk{H}{c}$ be the $c \times c$ Householder reflector matrix that maps the vector $\ve$ to $\sqrt{c}\hspace{1pt} \ve_1$.
The standard construction of a Householder reflector matrix gives $\mmk{H}{c} = \mmk{I}{c} - \tfrac{\vu\vu^T}{c - \sqrt{c}}$ where $\vu = \sqrt{c}\hspace{1pt}\ve_1 - \vvk{e}{c}$.
Note that $\ve$ is orthogonal to the bottom $c-1$ rows of $\mmk{H}{c}$; since $\mmk{H}{c}$ is itself an orthogonal matrix, this implies the lower $c-1$ rows of $\mmk{H}{c}$ are an orthonormal basis for the nullspace of $\ve$.
Thus, an orthonormal basis for the eigenspace of $\mA_{K_c}$ with eigenvalue $-1$ is given by the set of vectors $\{n_j\}_{j=1}^{c}$, where
\[
    \vn_j = \ve_j + \left(\tfrac{1}{c - \sqrt{c}}\right)( \sqrt{c}\hspace{1pt}\ve_1 - \ve).
\]
By setting $\mnk{c} = [\vn_2, \cdots, \vn_c ]$, we have that
 $[ \tfrac{1}{\sqrt{c}} \vvk{e}{c},  \mnk{c} ]$ is a $c\times c$ orthogonal matrix and gives the useful identity
\begin{align}\label{eqn:N-property}
    \mnk{c}\mnk{c}^T &= \mmk{I}{c} - \tfrac{1}{c} \vvk{e}{c}\vvk{e}{c}^T.
\end{align}
Thus, $\mA_{K_c}$ has eigendecomposition as follows:
  $\lambda_1 = (c-1)$ with multiplicity 1 and eigenvector $\vv_1 = \tfrac{1}{\sqrt{c}} \ve$, and
  $\lambda_j = -1$ with multiplicity $(c-1)$ and eigenvector $\vv_j = \vn_j$.

\paragraph{Eigenvectors of $\mA_{G(c,m)}$}
Each eigenvector $\vv$ for $\mA_{G(c,m)}$ is a length $c(m+1)$ vector which we divide into a block of size $c$ and a block of size $cm$:
\[
    \mA_{G(c,m)} =
    \left(
        \begin{array}{cc}
            0 & \vvk{e}{m}^T \otimes \mIs{c} \\
            \vvk{e}{m} \otimes \mIs{c} & \mIs{m} \otimes \mA_{K_c}
        \end{array}
    \right), \quad
    \vv = \bmat{ \vvk{y}{c} \\ \vvk{w}{m} \otimes \vvk{x}{c} }.
\]
The length $c$ subvector $\vvk{y}{c}$ of each eigenvector corresponds to the independent set of $G(c,m)$; each $\vvk{x}{c}$ corresponds to one of the $K_c$ subgraphs.
We summarize the eigendecomposition of $\mA_{G(c,m)}$ in Table~\ref{tab:eigenvectors} and present an ordered list of the eigenvalues with their multiplicities in Table~\ref{tab:eigenvalues}.

\begin{table}
    \centering
    \footnotesize
    \begin{tabularx}{\textwidth}{lXr}
      eigenvalue & eigenvector & eigenspace dim\\
      \toprule
      \vspace{5pt}
      $\lambda = \left(\tfrac{1}{2}\right)\left( (c-1) \pm \sqrt{(c-1)^2 + 4m} \right)$ &   $\bmat{
                            \vvk{e}{c} \\
                            \tfrac{1}{\lambda - (c-1)} \vvk{e}{m} \otimes \vvk{e}{c}
                          }  \cdot  \left( \frac{(\lambda - \lambda_2)^2}{c ( (\lambda - \lambda_2)^2 + m )} \right)^{\tfrac{1}{2}}$
              & 1, each   \\
      \vspace{5pt}
      $\lambda = \left(\tfrac{1}{2}\right)\left( -1 \pm \sqrt{1 + 4m} \right)$  & $\bmat{
                \mnk{c} \\
                \tfrac{1}{\lambda + 1} \vvk{e}{m}\otimes\mnk{c}
              }  \cdot  \left( \frac{(\lambda + 1)^2}{(\lambda + 1)^2 + m} \right)^{\tfrac{1}{2}}$
              & $c-1$, each\\
      \vspace{5pt}
      $\lambda = (c-1)$ & $\bmat{ 0 \\ \mnk{m} \otimes \vvk{e}{c} }  \cdot  \left( \frac{1}{c} \right)^{\tfrac{1}{2}}$
      &  $m-1$ \\
      \vspace{5pt}
      $\lambda = -1$ & $\bmat{ 0 \\ \mnk{m} \otimes \mnk{c} } $ & $(c-1)(m-1)$ \\
    \end{tabularx}
    \caption{\label{tab:eigenvectors}
        Complete, orthonormal set of eigenvectors for $\mA_{G(c,m)}$. Each vector is length $c(m+1)$ and is divided into two subvectors: a top component of length $c$ whose entries correspond to the independent set of $G(c,m)$, and a bottom component of length $cm$ corresponding to the nodes in the cliques.
    }
\end{table}

\begin{table}
    \centering
    \begin{tabularx}{0.80\textwidth}{lrXr}
      eigenvalue & value & & multiplicity \\
      \toprule
      $\lambda_1$  &  $\left(\tfrac{1}{2}\right)\left( (c-1) + \sqrt{(c-1)^2 + 4m} \right)$ & & 1 \\
      $\lambda_2$  &  $c-1$  &  & $(m-1)$  \\
      $\lambda_3$  &  $\left(\tfrac{1}{2}\right)\left( -1 + \sqrt{1 + 4m} \right)$  &  & $(c-1)$  \\
      $\lambda_4$  &  $-1$  &  & $(m-1)(c-1)$  \\
      $\lambda_5$  &  $\left(\tfrac{1}{2}\right)\left( (c-1) - \sqrt{(c-1)^2 + 4m} \right)$  &  & 1  \\
      $\lambda_6$  &  $\left(\tfrac{1}{2}\right)\left( -1 - \sqrt{1 + 4m} \right)$  &  & $(c-1)$    \\
    \end{tabularx}
    \caption{\label{tab:eigenvalues}
        Ordered list of eigenvalues of $G(c,m)$ for $c, m \in \mathbb{N}^+$.
        The particular ordering
        $\lambda_1 > \lambda_2$ is guaranteed to hold because $c, m > 0$.
        The ordering $\lambda_2 > \lambda_3$ holds as long as $m < c^2 - c$,
        $\lambda_3 > \lambda_4$ always holds because $m > 0$,
        $\lambda_4 > \lambda_5$ holds as long as $c < m$,
        and finally $\lambda_5 > \lambda_6$ always holds because $c,m > 0$.
    }
\end{table}

\subsubsection*{Entropic sub-family of $G(c,m)$}

We now establish that an infinite sub-family of the graph class $G(c,m)$ is
entropic with respect to the matrix exponential and discuss its implications for
intervals of sub-entropic parameter values.
The proof of Theorem~\ref{thm:entropic-kks-general} is technical, and we defer it to the Appendix.

\begin{theorem}\label{thm:entropic-kks-general}
    There exists some $C \in \mathbb{N}^+$ such that for each $c \geq C$
there exists at least one value $\beta \in (0, \tfrac{1}{c-2})$ for which the two walk-classes of $G(c,c+1)$ have identical subgraph centrality.

\end{theorem}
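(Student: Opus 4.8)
The plan is to fix $m = c+1$, let $d_I(\beta)$ and $d_C(\beta)$ denote the common value of $\exp(\beta\mA)_{jj}$ over the independent-set nodes and over the clique nodes respectively, and study the single-variable function $g(\beta) = d_I(\beta) - d_C(\beta)$. Since $G(c,c+1)$ has exactly the two walk-classes, an entropic value is precisely a positive root of $g$, so it suffices to exhibit two points $0 < a < b < \tfrac{1}{c-2}$ with $g(a) > 0 > g(b)$ and invoke the intermediate value theorem (using that $g$ is analytic, being a finite combination of the $\exp(\beta\lambda_i)$ read off from Table~\ref{tab:eigenvectors}).

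First I would expand $g$ as a power series in $\beta$. Writing $\delta_k = (\mA^k)_{\mathrm{ind}} - (\mA^k)_{\mathrm{clq}}$, we have $g(\beta) = \sum_{k\ge 0}\tfrac{\beta^k}{k!}\delta_k$, and the first nontrivial coefficients are purely combinatorial: $\delta_0 = \delta_1 = 0$ (looplessness), $\delta_2 = \deg_{\mathrm{ind}} - \deg_{\mathrm{clq}} = (c+1) - c = 1$, and $\delta_3 = 0 - (c-1)(c-2)$, since an independent-set node lies on no triangle while a clique node lies on $\binom{c-1}{2}$. Hence
\[
   g(\beta) = \tfrac{1}{2}\beta^2 \;-\; \tfrac{(c-1)(c-2)}{6}\beta^3 \;+\; R(\beta), \qquad R(\beta) = \sum_{k\ge 4}\tfrac{\beta^k}{k!}\delta_k .
\]
The quadratic term (the higher degree of the independent set) keeps $g$ positive for very small $\beta$, while the large, negative cubic term (triangles strongly favour clique nodes) drives $g$ negative once $\beta$ reaches the scale $c^{-2}$. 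Concretely, at $a = c^{-3}$ the quadratic term dominates and $g(a) = \tfrac12 c^{-6}(1+o(1)) > 0$; at $b = 6c^{-2}$ the cubic term wins and $g(b) = (18 - 36 + o(1))\,c^{-4} < 0$. Both points lie in $(0,\tfrac{1}{c-2})$ because $6c^{-2} < \tfrac{1}{c-2}$ is equivalent to $c^2 - 6c + 12 > 0$, which holds for every $c$.

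The crux, and the step I expect to be the main obstacle, is bounding the tail $R(\beta)$ at the scale $\beta \sim c^{-2}$ tightly enough to be negligible against the $\Theta(c^{-4})$ gap between the quadratic and cubic terms. The naive estimate $|\delta_k| \le 2\lambda_1^k$ (from orthonormality of the eigenvectors, with $\lambda_1 = \Theta(c)$) is fatally lossy: it only yields $|R(6c^{-2})| = O(c^{-4})$, the same order as the gap. To do better I would use the explicit eigendata of Table~\ref{tab:eigenvectors} to write $c\,\delta_k = \sum_i \lambda_i^k q_i$, where $q_i$ is the per-eigenspace difference of squared node weights. The decisive observation is that the coefficient on the largest eigenvalue $\lambda_1 \approx c$ is only $q_1 = O(c^{-2})$, whereas the $\Theta(1)$-sized coefficient sits on the strictly smaller, clique-only eigenvalue $\lambda_2 = c-1$; all remaining eigenvalues have magnitude $O(\sqrt{c})$ or less. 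Consequently $|c\,\delta_k| \lesssim (c-1)^k$, the contributions of the other eigenspaces being of strictly smaller order, and thus $|R(\beta)| \le \tfrac{1}{c}\sum_{k\ge 4}\tfrac{\beta^k}{k!}(c-1)^k + \cdots = O(c^{-5})$ at $\beta = \Theta(c^{-2})$, an order smaller than the gap, as required.

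Finally I would assemble the estimates: choose $C$ large enough that for all $c \ge C$ the remainder bound and the leading-order inequalities above hold simultaneously, so that $g(c^{-3}) > 0$ and $g(6c^{-2}) < 0$, and conclude by the intermediate value theorem that $g$ has a root $\beta \in (c^{-3}, 6c^{-2}) \subset (0,\tfrac{1}{c-2})$, the desired entropic value. I note in passing that the same eigenvalue bookkeeping shows $g$ is in fact positive again at the right endpoint, so the root arises as $g$ dips below zero before climbing back up; this is precisely why an interior evaluation point, rather than the endpoint itself, is essential.
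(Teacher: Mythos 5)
Your proposal is correct in its essentials and shares the paper's skeleton (the sign of $d_I-d_C$ is positive near $\beta=0$ by the degree gap $m=c+1>c$, becomes negative somewhere in $(0,\tfrac{1}{c-2})$, and the intermediate value theorem finishes), but the two arguments locate the sign change by genuinely different computations. The paper evaluates at the endpoint $\beta_0=\tfrac{1}{c-2}$ itself, writes both centralities in closed form from the eigendecomposition, and runs a delicate term-by-term asymptotic comparison (splitting into $h_1>g_1$ and $h_2>g_2$, converting the $\lambda_3,\lambda_6$ terms into $\cosh$ and $\sinh$, and chasing limits as $c\to\infty$). You instead evaluate at the interior point $\beta=6c^{-2}$ and exploit the Taylor expansion, where the leading coefficients are purely combinatorial: $\delta_2=1$ (degree gap) and $\delta_3=-(c-1)(c-2)$ (clique nodes lie on $\binom{c-1}{2}$ triangles, independent-set nodes on none), giving a main term $(18-36)c^{-4}$, with the eigendata used only to show the tail is $O(c^{-5})$. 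Your route buys a more transparent explanation of \emph{why} the crossing happens (triangles overtake degree at scale $\beta\sim c^{-2}$) and avoids the hyperbolic-trig manipulations; the paper's route pins down the endpoint behavior exactly, which is what makes the interval bound $\tfrac{1}{c-2}$ natural. One caution on your tail estimate: the uniform claim $|c\,\delta_k|\lesssim(c-1)^k$ fails for very large $k$ because $\lambda_1>c-1$ and $(\lambda_1/(c-1))^k$ eventually exceeds the $O(c^{2})$ headroom from $q_1=O(c^{-2})$; you should instead bound each eigenvalue's contribution to the weighted sum $\sum_k \beta^k\lambda_i^k/k!$ separately, which still yields $O(c^{-5})$.

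Your closing aside is wrong, however: the paper proves $\mathrm{CN}>\mathrm{IS}$ at $\beta=\tfrac{1}{c-2}$ for large $c$, i.e.\ $g(\tfrac{1}{c-2})<0$, so $g$ does \emph{not} climb back above zero by the right endpoint, and an interior evaluation point is a convenience rather than a necessity. This does not affect the validity of your main argument, but you should delete or correct that remark.
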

\begin{corollary}\label{cor:entropic-kks-general}
  There exists some $C \in \mathbb{N}^+$ such that for each $c \geq C$
  there exists at least one value $\beta \in (0, \tfrac{1}{c-2})$ for which the graph $G(c,c+1)$ has maximum walk-entropy, i.e., $G(c,c+1)$ is entropic.
\end{corollary}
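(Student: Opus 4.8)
The plan is to obtain Corollary~\ref{cor:entropic-kks-general} as an essentially immediate consequence of Theorem~\ref{thm:entropic-kks-general} together with the structural facts about $G(c,c+1)$ recorded earlier in this section. The genuine analytic work lives in the theorem (deferred to the Appendix); here the task is purely to translate the collision of the two walk-classes into the three conditions required by Definition~\ref{def:entropic}: constant-diagonal $\exp(\beta\mA)$, connectedness, and failure of walk-regularity.

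First I would note that $G(c,c+1)$ has exactly two walk-classes, the independent set and the clique nodes, and recall that every node within a single walk-class shares the same value $\exp(\beta\mA)_{jj}$ for every $\beta$. Theorem~\ref{thm:entropic-kks-general} supplies, for all $c \geq C$, a value $\beta \in (0, \tfrac{1}{c-2})$ at which the two walk-classes have equal subgraph centrality. Combining these two observations, at this particular $\beta$ all $c(c+2)$ diagonal entries of $\exp(\beta\mA)$ coincide, so $\exp(\beta\mA)$ is constant diagonal. By the discussion in Section~\ref{sec:prelims}, a constant-diagonal $\exp(\beta\mA)$ is exactly the condition under which the centrality distribution $\pscf{j}{\beta}$ is uniform, whence the walk entropy $S^V(G,\beta)$ attains its maximum value $\log(c(c+2))$.

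It then remains to verify that $G(c,c+1)$ is connected and non--walk-regular so that Definition~\ref{def:entropic} applies. Connectivity is direct from the construction: each clique is itself connected, every clique is joined to the independent set through its perfect matching, and any two independent-set nodes are linked through a common clique. Non--walk-regularity I would establish by comparing closed $2$-walk counts, i.e.\ node degrees: an independent-set node has degree $m = c+1$, whereas a clique node has degree $(c-1)+1 = c$. Since $c+1 \neq c$, the diagonal of $\mA^2$ is non-constant, so $G(c,c+1)$ is not walk-regular and its two walk-classes are genuinely distinct. Finally, since $\exp(x) = \sum_{k \ge 0} x^k/k!$ has all-positive coefficients it is a PPSC function, so the constant-diagonal $\beta$ found above certifies that $G(c,c+1)$ is entropic.

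Because Theorem~\ref{thm:entropic-kks-general} is assumed, no substantive obstacle remains; the only points needing care are the two I have flagged. The first is confirming that the two walk-classes do not accidentally merge into one at the choice $m = c+1$ (which would render the graph walk-regular and the corollary vacuous); the degree computation rules this out cleanly. The second is the logical step that a collision of the two walk-classes yields a \emph{globally} constant diagonal — this holds precisely because $G(c,c+1)$ has exactly two walk-classes, a fact I would state explicitly rather than leave implicit.
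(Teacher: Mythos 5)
Your proposal is correct and matches the paper's (implicit) reasoning: the paper treats this corollary as an immediate consequence of Theorem~\ref{thm:entropic-kks-general}, relying on exactly the facts you spell out — that $G(c,c+1)$ has precisely two walk-classes whose collision forces a constant diagonal, that the graph is connected by construction, and that the differing degrees ($c+1$ for independent-set nodes versus $c$ for clique nodes) rule out walk-regularity. Your explicit verification of these points is a faithful, slightly more careful rendering of the same argument.
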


Since Corollary~\ref{cor:entropic-kks-general} exhibits a sequence of entropic values $\beta$ that converges to zero, we can rule out sub-entropic intervals near zero.
\begin{corollary}
    There is no $\eps > 0$ such that $(0,\eps)$ is a sub-entropic interval for subgraph centrality.
\end{corollary}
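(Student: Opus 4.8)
The plan is to deduce the statement directly from Corollary~\ref{cor:entropic-kks-general} by a simple squeezing argument. First I would recall the definition: an interval $(0,\eps)$ is \emph{sub-entropic} precisely when no $\beta \in (0,\eps)$ is an entropic value for any graph. To show that no such $\eps$ exists, it suffices to fix an arbitrary $\eps > 0$ and exhibit a single entropic value lying inside $(0,\eps)$; since $\eps$ was arbitrary, this rules out every interval of the form $(0,\eps)$ at once.

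For the main step I would invoke Corollary~\ref{cor:entropic-kks-general}, which supplies a constant $C \in \mathbb{N}^+$ so that for each $c \geq C$ the graph $G(c,c+1)$ is entropic with a witnessing value $\beta_c \in \left(0, \tfrac{1}{c-2}\right)$. The key observation is that the right endpoint $\tfrac{1}{c-2}$ tends to $0$ as $c \to \infty$. Hence, given $\eps > 0$, I would choose any integer $c$ with $c \geq C$ and $c > 2 + \tfrac{1}{\eps}$; the latter forces $\tfrac{1}{c-2} < \eps$, so that
\[
    \beta_c \in \left(0, \tfrac{1}{c-2}\right) \subset (0, \eps).
\]
Because $\beta_c$ is an entropic value (it witnesses maximum walk entropy for the entropic graph $G(c,c+1)$), the interval $(0,\eps)$ contains an entropic value and therefore is not sub-entropic.

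There is no genuine mathematical obstacle at this final stage: all of the difficulty is absorbed into the deferred proof of Theorem~\ref{thm:entropic-kks-general} and hence Corollary~\ref{cor:entropic-kks-general}, which establish both the existence of the witnessing $\beta_c$ and the fact that each $G(c,c+1)$ is connected and non--walk-regular. The only points requiring mild care are bookkeeping ones: ensuring $c > 2$ so that $\tfrac{1}{c-2}$ is well-defined and positive (which follows by taking $C \geq 3$ if necessary), and noting that the definition of an entropic value already bundles in connectedness and non--walk-regularity, so nothing beyond citing the corollary needs to be checked.
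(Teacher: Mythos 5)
Your proposal is correct and follows exactly the paper's (implicit) argument: the paper derives this corollary directly from Corollary~\ref{cor:entropic-kks-general} by observing that the entropic values $\beta_c \in \left(0, \tfrac{1}{c-2}\right)$ accumulate at zero, so every interval $(0,\eps)$ contains one. Your version simply spells out the bookkeeping (choosing $c > 2 + \tfrac{1}{\eps}$) that the paper leaves to the reader.
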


\section{Infinite families of $f$-entropic graphs}\label{sec:infinite-tensor}
In previous sections we considered graphs that are entropic with respect to the specific function $f(x) = \exp(x)$.
Here we show that there are infinitely many functions $f(x)$ for which $f$-entropic graphs exist.
More precisely, we show that for any analytic function that is entropic with respect to at least one graph, $G$, there is an infinite family of graph-function pairs such that the graph is entropic with respect to the function.
We construct these graphs using the graph tensor product $G \otimes H$, observing that it
satisfies $\mA_{G} \otimes \mA_{H} = \mA_{G \otimes H}$.

Given an $h$-entropic graph $G$ and walk-regular graph $H$, we will construct a function $f$ that is entropic on $G \otimes H$ under conditions described below.
We begin by proving some necessary lemmas.
\begin{lemma}\label{lem:walk-reg-triangle}
  Let $H$ be walk-regular and contain at least one triangle.
  Then $(H^k)_{jj} > 0$ for each $j$ and integers $k \geq 2$.
\end{lemma}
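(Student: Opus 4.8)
The plan is to exploit walk-regularity to reduce the claim to producing, for each $k \geq 2$, a single closed $k$-walk anywhere in $H$. Since walk-regularity means each power $\mA_H^k$ has constant diagonal, all of the quantities $(\mA_H^k)_{jj}$ for a fixed $k$ coincide; hence it is enough to find one node at which a closed walk of length $k$ exists, and positivity of $(\mA_H^k)_{jj}$ at every node then follows immediately. Note that no connectivity assumption is needed, since this reduction rests only on the equality of diagonal entries.

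Next I would fix a triangle with vertices $a, b, c$ (so that the edges $ab$, $bc$, $ca$ all exist) and build an explicit closed walk of length $k$ based at $a$, splitting on the parity of $k$. For even $k$ I would simply traverse the edge $ab$ back and forth $k/2$ times, which returns to $a$ after exactly $k$ steps. For odd $k \geq 3$ I would pad with $(k-3)/2$ back-and-forth traversals of $ab$ (a nonnegative integer count) and then close up with one trip around the triangle $a \to b \to c \to a$, giving a closed walk of total length $2 \cdot \tfrac{k-3}{2} + 3 = k$. In both cases this exhibits $(\mA_H^k)_{aa} \geq 1$, and walk-regularity upgrades this to $(\mA_H^k)_{jj} = (\mA_H^k)_{aa} > 0$ for every $j$.

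I do not expect a real obstacle here; the only point requiring care is the odd values of $k$. For even $k$ a single edge already forces a positive diagonal, but a bipartite graph (which may have arbitrarily many edges) has $(\mA_H^k)_{jj} = 0$ for all odd $k$, so the triangle hypothesis is genuinely indispensable. The triangle furnishes a length-$3$ odd closed walk, and the even padding lets it reach every odd length at least $3$; thus the triangle is used exactly to cover odd $k$, while walk-regularity is what converts positivity at the single vertex $a$ into positivity everywhere.
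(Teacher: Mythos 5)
Your proof is correct and uses essentially the same ingredients as the paper's: the triangle supplies odd-length closed walks, back-and-forth edge traversals supply even-length padding, and walk-regularity makes the diagonal constant. The only (harmless) structural difference is that you invoke walk-regularity once at the end to transfer positivity from the single triangle vertex $a$ to all nodes, whereas the paper first uses it to conclude that \emph{every} node lies on a triangle and then builds the walks locally at each node; your ordering is slightly more economical.
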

\begin{proof}
  Because $H$ contains a triangle, there exists a node $j \in H$ such that $(H^3)_{jj} > 0$.
  This implies $(H^3)_{ii} > 0$ for all nodes $i \in H$ by walk-regularity,
  and so each node is incident to at least one triangle.
  Thus, every node in $H$ has at least at one neighbor, and so $(H^2)_{ii} > 0$ for each $i$.

  Consider any integer $k\geq 4$.
  If $k$ is even, then each node $i$ must be incident to a closed $k$-walk: for example, take the walk from node $i$ to any of its neighbors and back to $i$, repeating until the length is $k$.
  If instead $k$ is odd, then it is at least 5. Consider the walk starting at $i$, traversing the triangle that must be incident to $i$ (as proved above), and then proceeding to a neighbor of $i$ and back to $i$ until the length is $k$.
  This proves that $(H^k)_{ii}$ is positive for each $i$ for $k \geq 4$.
\end{proof}

\begin{lemma}\label{lem:tensor-constant-diagonal-lemma}
  Given an $h$-entropic graph $G$ and a walk-regular graph $H$ that contains at least one triangle, we can construct a positive sequence $c_k > 0$ so that for the function $f(x) = \sum_{k=0}^{\infty} c_k x^k$,
   $f(\mA_{G \otimes H})$ is constant diagonal.
\end{lemma}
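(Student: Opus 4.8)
The plan is to exploit the multiplicativity of the Kronecker product under taking powers. Since $\mA_{G\otimes H} = \mA_G \otimes \mA_H$ and $(\mA_G \otimes \mA_H)^k = \mA_G^k \otimes \mA_H^k$, any candidate power series $f(x) = \sum_{k=0}^\infty c_k x^k$ evaluated at $\mA_{G\otimes H}$ has, in the diagonal entry indexed by the pair $(i,j)$ (with $i$ a node of $G$ and $j$ a node of $H$), the value
\[
    \sum_{k=0}^\infty c_k\, (\mA_G^k)_{ii}\,(\mA_H^k)_{jj}.
\]
Because $H$ is walk-regular, $(\mA_H^k)_{jj}$ equals a constant $w_k := (\mA_H^k)_{jj}$ that does not depend on $j$, so this diagonal entry is already independent of the $H$-coordinate $j$. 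It then remains only to choose $\{c_k\}$ so that the sum is also independent of the $G$-coordinate $i$.

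The key idea is to use the witnessing $h$-entropy of $G$ to absorb the factors $w_k$. Let $h(x) = \sum_k a_k x^k$ be the PPSC function (so $a_k > 0$ for all $k$) and $\beta_0 > 0$ the value for which $h(\beta_0 \mA_G)$ is constant diagonal. First I would set $c_k := a_k \beta_0^k / w_k$ for every index $k$ with $w_k \neq 0$. With this choice each surviving term becomes $a_k\beta_0^k (\mA_G^k)_{ii}$, so the diagonal entry collapses to $\sum_k a_k\beta_0^k (\mA_G^k)_{ii} = h(\beta_0 \mA_G)_{ii}$, which is constant in $i$ by hypothesis. Hence $f(\mA_{G\otimes H})$ would be constant diagonal.

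The positivity requirement $c_k > 0$ is where the triangle hypothesis first enters. Since $a_k > 0$ and $\beta_0 > 0$, it suffices that $w_k > 0$. We have $w_0 = 1$, and Lemma~\ref{lem:walk-reg-triangle} guarantees $w_k > 0$ for all $k \geq 2$. The only exceptional index is $k=1$: because $G$ and $H$ are loopless, $(\mA_G^1)_{ii} = 0 = w_1$, so the $k=1$ term contributes nothing to either diagonal regardless of $c_1$. I would therefore simply set $c_1 = 1$ (any positive value works), which preserves both $c_k > 0$ for all $k$ and the collapse above, since the modified $k=1$ term vanishes on both sides.

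The main obstacle is convergence: I must confirm that $f(x) = \sum_k c_k x^k$ genuinely defines a matrix function on the spectrum of $\mA_{G\otimes H}$, i.e. that it converges at every eigenvalue, each of which has magnitude at most $\rho_G \rho_H$, the product of the spectral radii of $G$ and $H$. The delicate point is the growth rate of $w_k$, and here the triangle hypothesis pays off a second time. Since $H$ contains a triangle, every node lies in a non-bipartite component, so by Perron--Frobenius one has $w_k^{1/k} \to \rho_H$ and, more precisely, $\rho_H^k / w_k$ converges to a finite positive constant (via $n_H w_k = \tr(\mA_H^k) = \rho_H^k(1+o(1))$ up to the number of components). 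Consequently
\[
    \sum_{k=0}^\infty c_k (\rho_G\rho_H)^k = \sum_{k=0}^\infty a_k(\beta_0\rho_G)^k \cdot \frac{\rho_H^k}{w_k},
\]
which converges by limit comparison with $\sum_k a_k(\beta_0\rho_G)^k = h(\beta_0\rho_G)$; the latter converges because $h(\beta_0 \mA_G)$ is a well-defined matrix, forcing $\beta_0\rho_G$ into the interval of convergence of $h$. Since all coefficients $c_k$ are nonnegative, absolute convergence at $\rho_G\rho_H$ yields convergence at every eigenvalue of $\mA_{G\otimes H}$, so $f(\mA_{G\otimes H})$ is well-defined and equal to the constant-diagonal matrix constructed above.
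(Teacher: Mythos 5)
Your construction is essentially the paper's: both factor the diagonal entry as $\sum_k c_k (\mA_G^k)_{ii}(\mA_H^k)_{jj}$, use walk-regularity of $H$ to replace $(\mA_H^k)_{jj}$ by a constant, invoke Lemma~\ref{lem:walk-reg-triangle} for positivity of that constant when $k\geq 2$, and then rescale the coefficients of the witnessing function $h$ by its reciprocal (your separate treatment of $k=1$, where the constant vanishes for loopless graphs, is an equivalent bookkeeping choice to the paper's $c_1=h_1$). The one substantive difference is your final paragraph verifying, via the Perron--Frobenius growth rate of the closed-walk counts, that the rescaled series still converges on the spectrum of $\mA_{G\otimes H}$ --- a point the paper's proof leaves implicit, so this is a welcome strengthening rather than a deviation.
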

\begin{proof}
    To construct $f$ so that $f(\mA_{G \otimes H})$ is constant diagonal, consider an arbitrary diagonal entry $f( \mA_{G \otimes H})_{\ell \ell}$ for some $\ell$.
    Then $f(\mA_{G \otimes H})_{\ell \ell} =
    (\ve_i\otimes\ve_j)^Tf(\mA_{G\otimes H})(\ve_i\otimes\ve_j)$ for some $i$ and $j$.
    Using the fact $\mA_{G \otimes H} = \mA_{G} \otimes \mA_H$ and expanding the power series of $f$, we can write
    \begin{equation}
        f(\mA_{G \otimes H})_{\ell \ell}
        \hspace{3pt} = \hspace{3pt}
        \sum_{k=0}^{\infty} c_k  (\ve_i \otimes \ve_j)^T (\mA_{G} \otimes \mA_H)^k (\ve_i \otimes \ve_j)
        \hspace{3pt} = \hspace{3pt}
        \sum_{k=0}^{\infty} c_k (\mA_{G}^k)_{ii} (\mA_H^k)_{jj}. \label{eqn:tensor-decomp}
    \end{equation}
    By assumption, the graph $H$ is walk-regular, and so for each $k$ there is a constant $C_H(k)$ such that $(\mA_H^k)_{jj} = C_H(k)$ for all $j$.
    Moreover, since $H$ contains at least one triangle, $C_H(k)$ is positive for $k \geq 2$ by Lemma~\ref{lem:walk-reg-triangle}.
    Thus, we can choose $c_k$ as follows.
    By assumption $G$ is $h$-entropic, so we know there exists a PPSC function $h$ such that $h(\mA) = \sum_{k=0}^{\infty} h_k (\mA_G^k)$ is constant diagonal.
    Thus, we set $c_k = h_k / C_H(k)$ for $k\geq 2$, since $C_H(k)$ is positive there, and $c_0 = h_0, c_1 = h_1$.
    Substituting $c_k$ into Equation~\eqref{eqn:tensor-decomp} and simplifying, we get
    $f(\mA_{G\otimes H})_{\ell \ell} = \sum_{k=0}^{\infty} h_k (\mA_{G}^k)_{ii}$.

    The expression $\sum_{k=0}^{\infty}h_k \mA_G^k$ is constant diagonal by choice of the sequence $h_k$, and so for each $k$ there is a constant $C_G(k)$ such that $(\mA_{G}^k)_{ii} = C_G(k)$ for all $i$.
    Hence, for each $\ell$ we know $f(\mA_{G\otimes H})_{\ell \ell}$ equals the constant $\sum_{k=0}^{\infty} h_k C_G(k)$.
\end{proof}
\begin{theorem}\label{thm:tensor-entropic}
      Let $G$ be $h$-entropic, and let $H$ be walk-regular, connected, and contain at least one triangle.
      Then $G \otimes H$ is $f$-entropic for some PPSC function $f$.
\end{theorem}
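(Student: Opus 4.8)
The plan is to verify the three requirements of Definition~\ref{def:entropic} for $G \otimes H$: that it carries a PPSC function making its $f$-subgraph centrality uniform, that it is connected, and that it is non--walk-regular. The first requirement is handled almost entirely by the preceding lemma. Lemma~\ref{lem:tensor-constant-diagonal-lemma} produces a sequence $c_k > 0$ so that $f(x) = \sum_{k \geq 0} c_k x^k$ makes $f(\mA_{G \otimes H})$ constant diagonal; taking $\beta_0 = 1$ gives exactly the condition we need. The one point to check is that $f$ is genuinely PPSC, i.e., that its power series has positive radius of convergence large enough for $f(\mA_{G\otimes H})$ to be well-defined. Since the coefficients are $c_k = h_k / C_H(k)$ and $H$ is connected with a triangle, $H$ is non-bipartite, so by Perron--Frobenius its spectral radius is the unique dominant eigenvalue and $C_H(k)^{1/k} \to \rho(\mA_H)$; combined with the convergence of $h$ on the spectrum of $\mA_G$, this shows the radius of convergence of $f$ is at least $R_h\,\rho(\mA_H) \geq \rho(\mA_G)\rho(\mA_H) = \rho(\mA_{G\otimes H})$, so the series defining $f(\mA_{G\otimes H})$ converges.

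For connectivity I would invoke Weichsel's theorem: the tensor product of two connected graphs is connected if and only if at least one factor is non-bipartite. Here $G$ is connected because it is $h$-entropic, $H$ is connected by hypothesis, and $H$ is non-bipartite because it contains a triangle; hence $G \otimes H$ is connected. I would also note in passing that $G \otimes H$ is simple and loopless, so it lies in the class of graphs under consideration.

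For non--walk-regularity I would reuse the diagonal identity behind Equation~\eqref{eqn:tensor-decomp}: writing each node of $G \otimes H$ as a pair $(i,j)$, we have $(\mA_{G \otimes H}^k)_{(i,j),(i,j)} = (\mA_G^k)_{ii}\,(\mA_H^k)_{jj} = (\mA_G^k)_{ii}\,C_H(k)$ by walk-regularity of $H$. Since $G$ is $h$-entropic it is non--walk-regular, so some power $\mA_G^k$ is not constant diagonal; because $\mA_G^0 = \mI$ and $\mA_G^1$ (being loopless) are always constant diagonal, the smallest such $k$ satisfies $k \geq 2$. For that $k$, Lemma~\ref{lem:walk-reg-triangle} gives $C_H(k) > 0$, so a pair of unequal diagonal entries $(\mA_G^k)_{ii} \neq (\mA_G^k)_{i'i'}$ yields distinct products $(\mA_G^k)_{ii}\,C_H(k) \neq (\mA_G^k)_{i'i'}\,C_H(k)$. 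Thus $\mA_{G\otimes H}^k$ is not constant diagonal and $G \otimes H$ is non--walk-regular.

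Assembling these three facts with the function $f$ from Lemma~\ref{lem:tensor-constant-diagonal-lemma} shows $G \otimes H$ is $f$-entropic. I expect the main obstacle to be connectivity: unlike the Cartesian product of Section~\ref{sec:cartesian-graph}, the tensor product can disconnect even when both factors are connected, and it is precisely the triangle hypothesis on $H$ (through non-bipartiteness) that rescues it. This is the same hypothesis Lemma~\ref{lem:walk-reg-triangle} uses to keep $C_H(k)$ positive, so the triangle is simultaneously guaranteeing connectivity, the positivity needed to build $f$, and the positivity needed to witness non--walk-regularity; making explicit why one hypothesis discharges all three roles is the subtlety worth stating carefully.
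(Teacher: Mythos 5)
Your proposal is correct and follows essentially the same route as the paper: Lemma~\ref{lem:tensor-constant-diagonal-lemma} for the constant diagonal, Weichsel's theorem for connectivity, and the identity $(\mA_{G\otimes H}^k)_{(i,j),(i,j)} = (\mA_G^k)_{ii}(\mA_H^k)_{jj}$ for non--walk-regularity. You are in fact slightly more careful than the paper in two spots worth keeping: you verify $C_H(k)>0$ at the witnessing power $k\geq 2$ (the paper simply asserts $(\mA_G^k)\otimes(\mA_H^k)$ is not constant diagonal, which silently needs Lemma~\ref{lem:walk-reg-triangle}, since a zero diagonal in $\mA_H^k$ would wash out the non-uniformity), and you check that the rescaled coefficients $c_k = h_k/C_H(k)$ still give a series converging on the spectrum of $\mA_{G\otimes H}$, a point the lemma's proof leaves implicit.
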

\begin{proof}
    By Lemma~\ref{lem:tensor-constant-diagonal-lemma},
    we can construct a PPSC function $f$ so that $f(\mA_{G \otimes H})$ is constant-diagonal.
    Since $G$ is $h$-entropic, by definition it is not walk-regular,
    and so $G\otimes H$ is not walk-regular.
    To see this, note that $\mA_G^k$ must be not constant-diagonal for some power $k$, and so $(\mA_{G\otimes H})^k = (\mA_G^k)\otimes(\mA_{H}^k)$ is not constant-diagonal.
    Thus, to conclude that $G \otimes H$ is \emph{entropic} we need only show that $G \otimes H$ is connected.
    A result of Weichsel~(\cite{weichsel1962kronecker}, Theorem 1) states that a tensor graph $G_1 \otimes G_2$ is connected if and only if both $G_1$ and $G_2$ are connected and at least one of them contains a cycle of odd length.
    Since $H$ contains a triangle by assumption, we are done.
\end{proof}

For an entropic graph $G$, by induction we have that $G \otimes \left( \bigotimes_{j=1}^N H_j \right)$ is entropic for any set of connected, walk-regular graphs $H_j$ that each contain a triangle.
Moreover, letting $C_{\ell}$ denote the cycle graph on $\ell$ nodes,
we observe that for any connected, walk-regular graph $F$,
the graph $F \Box C_3$ is connected, walk-regular, and contains a triangle, where we again use
the result of Chiue and Shieh to prove connectedness.
To see that for any graph $H$, $H \Box C_3$ contains a triangle, note that the diagonal of $(\mA_{H \Box C_3})^3$ has positive entries.
Thus, $G \otimes (F \Box C_3)$ gives a distinct entropic graph for each connected, walk-regular graph $F$.
Finally, we note that since $G(4,5)$ is entropic and every cycle graph $C_k$ is connected and walk-regular, by Theorem~\ref{thm:tensor-entropic} each graph $G(4,5) \otimes (C_k \Box C_3)$ is $f_k$-entropic for some function $f_k$, yielding an infinite family of entropic functions $f_k$.

\section{Centrality collisions: when distinct walk-classes have identical centrality}\label{sec:constructing-entropic}
We want to understand when nodes with distinct walk structures can be assigned the same score by a centrality measure---we call this occurrence a \emph{centrality collision}, or simply a collision.
More precisely, let $G$ be a connected, non--walk-regular graph, and let $\{w_j\}$ be any collection of distinct walk-classes in $G$.
We want to know when we can construct a PPSC function $f(x)$ such that $f(\mA)_{ii}$ is the same for all nodes in the classes $\{w_j\}$;
we say that the walk-classes $\{w_j\}$ \emph{collide} under $f$, and that $f$ \emph{induces a collision} at $\{w_j\}$.
Observe that a graph is $f$-entropic precisely when there exists a function $f$ that induces a collision at all of its walk-classes.

In the remainder of the section, we give a sufficient condition for a graph for the existence of a collision-inducing PPSC function on that graph (Corollary~\ref{cor:pos-suff-condition});
the sufficient condition generalizes to apply to entropic graphs.
Additionally, this leads to a related sufficient condition for concluding that a graph is not $f$-entropic for any function $f$ (Corollary~\ref{cor:walk-class-lin-sys}).
Finally, as an application of our theory, in Section~\ref{sec:spider-torus} we present a graph with three walk-classes which we prove is $f$-entropic using Corollary~\ref{cor:pos-suff-condition}.
This is interesting because previously all known $f$-entropic graphs had only two walk-classes.

\subsection{Sufficient condition for centrality collisions}\label{sec:suff-condition}
Given a collection $\{w_j\}$ of distinct walk-classes in a graph $G$, we would like an easily computable condition that characterizes whether there exists some PPSC function $f$ that induces a collision at $\{w_j\}$.
Here we present a practically computable sufficient condition for such a function's existence.
Interestingly, the condition connects a question about nodes' walk-classes to Farkas's Lemma on nonnegative solutions to linear equations.

\begin{theorem}\label{thm:pos-suff-condition}
    Let graph $G$ have walk matrix $\mW$ and adjacency matrix $\mA$.
    Given $\vb > 0$, if there exists $\vx > 0$ such that $\mW\vx = \vb$, then there exists a PPSC function $f(x)$ such that $\diag(f(\mA)) = \vb$.
\end{theorem}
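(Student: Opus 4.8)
The plan is to build $f$ by starting from the finitely supported coefficient sequence handed to us by the hypothesis and then ``switching on'' the missing coefficients through an absorption argument. Writing $f(x)=\sum_{k\ge0}c_kx^k$, the quantity we must control is $\diag(f(\mA))=\sum_{k\ge0}c_k\vvd{k}$, where $\vvd{0}=\ve$ and, since $G$ is loopless, $\vvd{1}=\mathbf{0}$. The hypothesis supplies $\vx>0$ with $\sum_{\ell=2}^{n-1}x_{\ell-1}\vvd{\ell}=\vb$, so the polynomial choice $c_\ell=x_{\ell-1}$ for $2\le\ell\le n-1$ and $c_k=0$ otherwise already gives $\diag(f(\mA))=\vb$ exactly. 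The only thing keeping this $f$ from being a PPSC function is that $c_0$, $c_1$, and the entire tail $c_k$ ($k\ge n$) vanish rather than being strictly positive, so the task is to perturb them positive without disturbing the diagonal.

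First I would dispose of $c_1$: because $\vvd{1}=\mathbf{0}$, its coefficient contributes nothing, so I may set $c_1$ to any positive value for free. The substance is to make $c_0$ and the tail strictly positive while preserving $\diag(f(\mA))=\vb$. The key structural fact is Cayley--Hamilton: every $\vvd{k}$ with $k\ge n$ lies in $\operatorname{span}\{\vvd{0},\dots,\vvd{n-1}\}$, so turning on any positive, summable tail $(c_k)_{k\ge n}$ adds to the diagonal only a vector in this fixed, finite-dimensional span. I would then cancel that contribution -- together with the newly introduced $c_0\ve$ term -- by a small perturbation of the finitely many coefficients $c_0,c_2,\dots,c_{n-1}$. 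Here the strict positivity $x_{\ell-1}>0$ coming from the hypothesis is exactly what provides the slack: a sufficiently small correction leaves each of $c_2,\dots,c_{n-1}$ positive.

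The main obstacle is the coefficient $c_0$, whose contribution $c_0\ve$ must be cancelled inside $\operatorname{span}\{\vvd{\ell}\}$; this amounts to expressing the all-ones vector $\ve$ through higher diagonal-power vectors. Using the spectral decomposition $\vvd{k}=\sum_i\lambda_i^k\,\diag(\mP_i)$ over the distinct eigenvalues $\lambda_i$ of $\mA$, I would interpolate a polynomial $P(x)=x^2R(x)$ with $P(\lambda_i)=1$ for every $i$, which yields $\sum_{k\ge2}\gamma_k\vvd{k}=\ve$ and thereby exhibits $\ve$ as a combination of walk-matrix columns, opening room for a small positive $c_0$. I expect this interpolation step to be the crux, and it is most delicate when $\mA$ has a zero eigenvalue, since then the factor $x^2$ cannot simply be peeled off; the nonsingular case is clean. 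Finally, to guarantee $f$ is a genuine PPSC function and not merely a formal series, I would force the tail to decay at least as fast as $1/k!$, so that the radius of convergence is infinite; after the bounded corrections above every coefficient is strictly positive, so $f$ is PPSC with $\diag(f(\mA))=\vb$, as required.
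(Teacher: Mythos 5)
Your core mechanism is the same one the paper uses: start from the finitely supported coefficient vector handed to you by the positive solution $\vx$, switch on a rapidly decaying positive tail, fold each $\mA^k$ with $k\geq m$ back into the low powers via the minimal polynomial, and absorb the resulting finite corrections into the leading coefficients using the strict positivity of $\vx$ as slack. That part of your argument matches the paper's proof and is sound.

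The genuine gap is the $c_0$ step, and it is not merely ``delicate'': under your literal reading of the walk matrix, with columns $\vvd{2},\dots,\vvd{n-1}$, the step cannot be closed in general, because the claim itself fails when $0$ is an eigenvalue of $\mA$. Take the star $K_{1,3}$: here $\vvd{2}=(3,1,1,1)^T$ and $\vvd{3}=\vecn{0}$, so $\vx=(1,1)^T>0$ gives $\vb=(3,1,1,1)^T>0$; but every PPSC $f$ satisfies $\diag(f(\mA))=c_0\ve+\bigl(\sum_{j\geq1}c_{2j}3^{j-1}\bigr)(3,1,1,1)^T$, and matching $\vb$ forces $c_0=0$. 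The structural reason is exactly the one you flagged: $\ve=\vvd{0}$ carries a component along the spectral projector of the eigenvalue $0$ that no $\vvd{\ell}$ with $\ell\geq1$ can see, so your interpolation polynomial $P(x)=x^2R(x)$ with $P(\lambda_i)=1$ for all eigenvalues cannot exist once $0\in\operatorname{spec}(\mA)$. The paper's proof silently sidesteps this by indexing the solution as $\vx=(x_0,\dots,x_{m-1})$ against the powers $\mA^0,\dots,\mA^{m-1}$ (i.e., the all-ones column $\vvd{0}$ is treated as part of the linear system, in tension with the walk-matrix definition of Section~\ref{sec:prelims}), so $x_0>0$ hands over $c_0>0$ directly and no representation of $\ve$ in terms of higher powers is ever needed. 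To repair your argument, adopt that convention (or restrict to nonsingular $\mA$, where your interpolation does go through); as written, the $c_0$ step is a real hole rather than a technicality.
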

\begin{proof}
    Assume $\vx > 0$ and $\mW\vx =  \vb$.
    We will construct positive coefficients $\{c_k\}_{k=0}^{\infty}$ such that $f(t) = \sum c_k t^k$ is a convergent power series, and so $f$ is a PPSC function.

    Let $m$ be the degree of the minimal polynomial of $\mA$.
    For each index $k \geq m$ we can use the minimal polynomial of $\mA$ to produce a set of coefficients $\{ p_{k,j} \}_{j=0}^{m-1}$ such that
    $\mA^k = \sum_{j=0}^{m-1} p_{k,j} \mA^j$.

    To construct the sequence $c_k$, we begin by defining the terms $c_k$ for $k \geq m$.
    To do this, we first define a related sequence.
    For each $j=0, \cdots, m-1$, we want a sequence $\{s_{j,k}\}_{k=m}^{\infty}$ so that $\left(\sum_{k=m}^{\infty} p_{k,j} s_{j,k} \right)$ converges to some positive value.
    One such sequence is $s_{j,k} = 2^{-k} |p_{k,j}^{-1}|$;
    if $p_{k,j} = 0$ then instead we set $s_{j,k} = 2^{-k}$.
    Note that each $s_{j,k}$ is positive.
    Next, for each $k\geq m$, we set $c_k = \min\{  s_{j,k} | j=0,\cdots,m-1 \} > 0$.
    For each $j=0,\cdots, m-1$ this guarantees $|p_{k,j} c_k| \leq 2^{-k}$ for all $k$, and so by the limit comparison test
    $\left( \sum_{k=m}^{\infty} p_{k,j} c_k \right)$ is convergent.

    Finally, for $j=0, \cdots, m-1$, set $c_j = x_j - \left( \sum_{k=m}^{\infty} p_{k,j} c_k \right)$.
    If any of these $c_j$ is negative, then choose $\{ c_k \}_{k=m}^{\infty}$ smaller so that the values $c_j = x_j - \left( \sum_{k=m}^{\infty} p_{k,j} c_k \right)$ are positive.
    This is possible because the terms $c_k$ can be chosen as close to 0 as we like, and $x_j > 0$ by assumption.
    The end result is that any positive solution $\vx$ to $\mW\vx = \vb$ enables us to equate
    \[
        \vb \hspace{3pt} = \hspace{3pt} \mW \vx
         \hspace{3pt} =
        \hspace{3pt} \diag\left( \sum_{j=0}^{m-1} x_j \mA^j \right)
            = \hspace{3pt} \diag\left( \sum_{j=0}^{m-1} \left( c_j +  \sum_{k=m}^{\infty} p_{k,j} c_k \right) \mA^j \right).
    \]
    Rearranging, we have $\vb = \diag\left( \sum_{j=0}^{m-1} c_j \mA^j +  \sum_{j=0}^{m-1}\sum_{k=m}^{\infty} p_{k,j} c_k \mA^j \right)$.
    We observe
    \[
       \sum_{j=0}^{m-1} \left( \sum_{k=m}^{\infty} p_{k,j} c_k \right) \mA^j
     \hspace{3pt} = \hspace{3pt}
       \sum_{k=m}^{\infty} c_k \sum_{j=0}^{m-1} p_{k,j}  \mA^j
     \hspace{3pt} = \hspace{3pt}
        \sum_{k=m}^{\infty} c_k \mA^k,
    \]
    and thus $\vb$ equals $\diag(f(\mA))$ for a PPSC function $f$.
\end{proof}

Note that if we restrict to a subset $S$ of rows of the equation $\mW\vx = \ve$ corresponding to a subset of walk-classes,
then by Theorem~\ref{thm:pos-suff-condition} a positive solution $\vx$ to $\mW_S\vx = \ve$ gives a PPSC function $f$ that induces a collision at the walk-classes contained in $S$.

\begin{corollary}\label{cor:pos-suff-condition}
  Given a graph $G$ with walk matrix $\mW$ and adjacency matrix $\mA$, fix any subset $W = \{w_j\}$ of walk-classes for $G$ and let $J$ be the set of all row indices of $\mW$ corresponding to nodes in the walk-classes in $W$.
  If the linear system $\mW_J\vx = \ve$ has a solution $\vx > 0$, then there exists a PPSC function $f$ that induces a collision at all the walk-classes in $W$.
  That is, there is a constant $c$ such that $f(\mA)_{ii} = c$ for all $i \in w_j \in W$.
\end{corollary}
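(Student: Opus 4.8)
The plan is to reduce the statement directly to Theorem~\ref{thm:pos-suff-condition}, exploiting the fact that the construction in that theorem depends only on the positive coefficient vector and not on the particular right-hand side. Concretely, the proof of Theorem~\ref{thm:pos-suff-condition} takes any $\vx > 0$ and manufactures a PPSC function $f$ whose diagonal is exactly $\mW\vx$; the hypothesis $\vb > 0$ there merely names that output. So I would take the positive solution $\vx$ guaranteed by the hypothesis $\mW_J\vx = \ve$ and feed this very same vector into the theorem.

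First I would set $\vb := \mW\vx$, the product of $\vx$ against the \emph{full} walk matrix, and check that $\vb > 0$ so that the theorem applies verbatim. Each entry of $\mW$ is a closed-walk count $(\mA^\ell)_{jj} \ge 0$, so $\mW$ is entrywise nonnegative; moreover its $\ell = 2$ column $\vvd{2}$ is strictly positive because every node of a connected graph on at least two vertices has a neighbor, giving $(\mA^2)_{jj} > 0$. Combined with $\vx > 0$, this forces every entry of $\vb = \mW\vx$ to be strictly positive. Applying Theorem~\ref{thm:pos-suff-condition} to the pair $(\vx,\vb)$ then yields a PPSC function $f$ with $\diag(f(\mA)) = \mW\vx$.

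It remains to read off the diagonal on the index set $J$. Since $\mW_J$ is by definition the restriction of $\mW$ to the rows indexed by $J$ --- the two matrices share their columns, so the same $\vx$ has the correct length for both --- restricting the identity $\diag(f(\mA)) = \mW\vx$ to rows in $J$ gives $f(\mA)_{ii} = (\mW_J\vx)_i = 1$ for every $i \in J$. Because $J$ collects precisely the nodes lying in the walk-classes of $W$, the function $f$ assigns the common value $c = 1$ to every node in every walk-class of $W$, that is, $f$ induces a collision at all of $W$, as required.

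The argument is essentially bookkeeping once Theorem~\ref{thm:pos-suff-condition} is available, so I do not anticipate a genuine obstacle. The one point deserving care --- and the only place the ambient connectedness assumption is used --- is the verification that the same $\vx$ solving the restricted system $\mW_J\vx = \ve$ also produces a legitimate entrywise-positive right-hand side $\vb = \mW\vx$ for the full system, which is exactly what lets us invoke the theorem as a black box rather than re-running its construction.
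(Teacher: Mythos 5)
Your proposal is correct and follows essentially the same route as the paper, which treats the corollary as an immediate consequence of Theorem~\ref{thm:pos-suff-condition} applied to the rows indexed by $J$. The only addition is your explicit check that $\vb = \mW\vx > 0$ (via nonnegativity of $\mW$ and strict positivity of the $\vvd{2}$ column for a connected graph), a detail the paper leaves implicit but which is needed to invoke the theorem verbatim as a black box.
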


The above results show that a solution to a particular linear program guarantees that a collision-inducing function exists.
On the other hand, the converse---whether the existence of a collision-inducing PPSC function guarantees the existence of a positive solution to $\mW\vx=\ve$---remains an open question.
Lastly, we remark that the above proof shows that if an entropic function exists for $G$ then $\mW\vx = \ve$ is consistent.

\begin{corollary}\label{cor:walk-class-lin-sys}
  Given a graph $G$ with walk matrix $\mW$,
  if the linear system $\mW\vx = \ve$ is inconsistent then $G$ is non--walk-regular and not entropic.
\end{corollary}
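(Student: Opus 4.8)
The plan is to prove the two contrapositives: walk-regularity implies that $\mW\vx=\ve$ is consistent, and entropicity likewise implies consistency. Inconsistency of $\mW\vx=\ve$ then rules out both conclusions at once, which is exactly the corollary. The unifying observation I would exploit is that in each of these cases there is a value $\beta_0>0$ for which $\exp(\beta_0\mA)$ is constant diagonal: for \emph{all} $\beta$ when $G$ is walk-regular, and for the entropic value $\beta_0$ when $G$ is entropic (this is precisely Definition~\ref{def:entropic} with $f(x)=\exp(x)$). Hence it suffices to show that constant-diagonality of $\exp(\beta_0\mA)$ for a single $\beta_0>0$ forces $\ve$ into the column space of $\mW$.

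Suppose $\exp(\beta_0\mA)$ equals $c\,\ve$ on its diagonal. Since $\exp(\beta_0 x)=\sum_{k\ge 0}\tfrac{\beta_0^k}{k!}x^k$ is a PPSC function and $(\mA^0)_{jj}=1$, $(\mA^2)_{jj}>0$ for the connected graph $G$, the common diagonal value satisfies $c>1$. Following the minimal-polynomial reduction used in the proof of Theorem~\ref{thm:pos-suff-condition}, I would rewrite $\exp(\beta_0\mA)$ as a polynomial $\sum_{j=0}^{m-1}q_j\mA^j$ in $\mA$ (where $m$ is the degree of the minimal polynomial) and take diagonals. Using $\vvd{0}=\ve$ together with $\vvd{1}=\mathbf 0$ from looplessness, this collapses to the single identity
\[
  (c-q_0)\,\ve \;=\; \sum_{j=2}^{m-1} q_j\,\vvd{j} \;=\; \mW\vq ,
\]
for the coefficient vector $\vq=(q_2,\dots,q_{m-1})$. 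Whenever $c\neq q_0$ this exhibits $\ve=\mW\bigl(\vq/(c-q_0)\bigr)$, so $\mW\vx=\ve$ is consistent, which is all that is required.

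I expect the delicate point, and the main obstacle, to be the degenerate case $c=q_0$, where the identity reduces to $\mW\vq=\mathbf 0$ and no longer directly produces $\ve$. For the walk-regular direction this case is harmless, since there $\vvd{2}=\deg\cdot\ve$ with positive degree, so $\ve$ is already a positive multiple of the first column of $\mW$ and consistency is immediate. For the entropic direction I would first note that $c>1$ forces $0$ to be a non-eigenvalue of $\mA$ in the degenerate case, because if $0$ lay in the spectrum the interpolating polynomial would satisfy $q_0=\exp(0)=1\neq c$. I would then split further: if additionally all $q_j=0$ for $j\ge 2$, then $\exp(\beta_0\mA)$ is an affine function of $\mA$, and since $\exp(\beta_0 x)$ is strictly convex it can agree with an affine function on at most two points, forcing $\mA$ to have at most two distinct eigenvalues; for a connected graph this makes $G$ complete and hence walk-regular, contradicting that $G$ is entropic. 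The remaining sub-case---some $q_j\neq 0$ with $\mW\vq=\mathbf 0$, so the walk-columns are linearly dependent---is where the genuine bookkeeping lies, and I would dispose of it by tracking the constant coefficient $p_{k,0}$ arising when a higher power $\vvd{k}$ is reduced against the minimal polynomial, which again deposits a nonzero multiple of $\ve$ inside the column space of $\mW$. Assembling these cases shows that constant-diagonality of $\exp(\beta_0\mA)$ always forces consistency of $\mW\vx=\ve$, and taking contrapositives delivers both halves of the corollary.
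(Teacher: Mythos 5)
Your overall route is the same one the paper gestures at: reduce the power series of $\exp(\beta_0 x)$ modulo the minimal polynomial, take diagonals, use $\vvd{0}=\ve$ and $\vvd{1}=\mathbf{0}$ to arrive at $(c-q_0)\ve=\mW\vq$, and read off consistency. The paper itself offers only the one-sentence remark preceding the corollary (``the above proof shows that if an entropic function exists for $G$ then $\mW\vx=\ve$ is consistent''), and that remark tacitly assumes $c\neq q_0$. You have correctly identified that this is precisely the point needing justification, and your handling of the walk-regular direction ($\ve$ is a positive multiple of the degree column $\vvd{2}$) and of two of the degenerate sub-cases (if $0$ is an eigenvalue then $q_0=\exp(0)=1<c$; if $\exp(\beta_0\mA)$ is affine in $\mA$ then strict convexity of the exponential forces at most two distinct eigenvalues, hence a complete and therefore walk-regular graph) is sound and goes beyond what the paper writes down.

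The gap is in your final sub-case, and the proposed fix does not close it. ``Tracking the constant coefficient $p_{k,0}$'' deposits a nonzero multiple of $\ve$ into the column space of $\mW$ only when some power $\vvd{k}$ with $k\geq m$ actually appears as a column of $\mW$, i.e.\ only when $m\leq n-1$: in that case $\vvd{m}$ is a column, its reduction against the minimal polynomial has a nonzero constant coefficient exactly because $0$ is not an eigenvalue, and consistency follows. But when the minimal polynomial has full degree $m=n$ (all eigenvalues simple), the columns of $\mW$ stop at $\vvd{n-1}=\vvd{m-1}$, no higher power is available to reduce, and the bookkeeping has nothing to act on. In that situation the degenerate hypothesis $c=q_0$ becomes a single scalar relation between $\beta_0$ and the spectrum (the value at $0$ of the interpolating polynomial of $e^{\beta_0 x}$ equalling the average of $e^{\beta_0\lambda_i}$), and nothing you have written rules it out; dispatching it needs a genuinely new idea rather than more reduction. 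To be fair, the paper's own justification never confronts this case either, so you have exposed a soft spot in the published argument rather than introduced one --- but as submitted, your proof is incomplete at exactly the step you flagged as delicate.
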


\paragraph{Connection to Farkas's Lemma}
We remark that for the linear system in Corollary~\ref{cor:pos-suff-condition} to have a positive solution, it is necessary that the system $\mW\vx=\ve$ satisfies the well-known Farkas's Lemma~\cite{Farkas1902}.
Farkas's Lemma says that a general linear system $\mM\vx = \vb$ has a solution $\vx\geq 0$ if and only if no $\vy$ exists such that $\vy^T\vb < 0$ and $\vy^T\mM \geq0$.
In our restricted setting, where $\mW$ is nonnegative and the right-hand side is $\ve$, we derive a more specific, necessary condition for Farkas's Lemma to hold---and, therefore, a necessary condition for a positive solution to exist in Corollary~\ref{cor:pos-suff-condition}.
However, this novel condition does not have an intuitive interpretation in the context of centrality and walk-classes, and so we defer further discussion to the appendix.

\subsection{A concrete application}\label{sec:spider-torus}

Figure~\ref{fig:spidertorus} displays an $f$-entropic graph, $ST(4,2,[5,3])$, which we call a ``spider torus'', that has exactly three walk-classes.
The center nodes from the spider graphs form the first walk-class, the outer/inner nodes of the spider graphs (labelled with subscripts 1/2) form the second/third. We experimentally verified the graph $ST(4,2,[5,3])$ to be $f$-entropic using Corollary~\ref{cor:pos-suff-condition}, as implemented in our software package~\cite{spiderdonuts_v1.0.0}.

\begin{figure}[!ht]
  \centering
  \resizebox{0.9\linewidth}{!}{\input{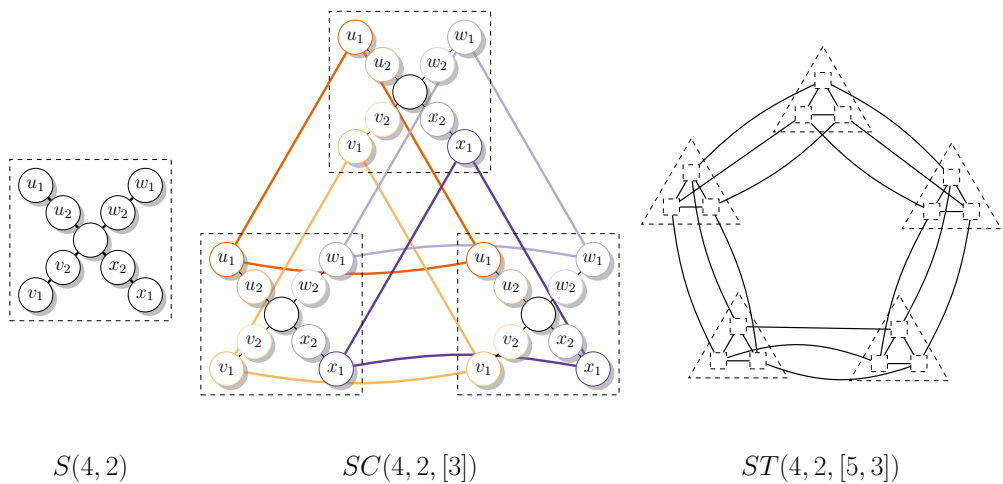}}
  \caption{
      \emph{(Left)} A spider graph of degree 4 and length 2, denoted $S(4,2)$.
      \emph{(Center)} A ``(4, 2, [3]) spider cycle'', denoted $SC(4,2,[3])$, consists of three copies of $S(4,2)$ such that the three copies of each outer-most node (with label $t_1$ for $t \in \{u,v,w,x\}$)  are connected in a cycle.
      \emph{(Right)} A ``(4, 2, [5, 3]) spider torus'', denoted $ST(4,2,[5,3])$, consists of five copies of $SC(4,2,[3])$ with cycles connecting each set of five inner nodes of the spider legs (with label $t_2$ for $t \in \{u,v,w,x\}$).  $ST(4,2,[5,3])$ has three walk-classes and can be shown to be $f$-entropic using Corollary~\ref{cor:pos-suff-condition}.
  }\label{fig:spidertorus}
\end{figure}

\section{Conclusions}\label{sec:conclusions}
We have considered when $f$-subgraph centrality measures induce collisions at different walk-classes in a graph; that is, when an $f$-subgraph centrality measure assigns identical scores to nodes that have different walk structures.
We settled two open questions about the cardinality of the set of entropic graphs, and the set of entropic values.
In particular, we exhibited an infinite family of graphs where subgraph centrality assigns identical scores to all nodes for two parameter values (Corollary~\ref{cor:cartesian-infinite});
furthermore, we constructed a separate infinite family of entropic graphs with entropic values $\beta_i$ that converge to zero (Corollary~\ref{cor:entropic-kks-general}), proving the set of all entropic values is infinite and has at least one limit point.
One consequence of this result is that no sub-entropic interval of the form $(0,\eps)$ exists.
It remains an open problem to determine whether there exists an interval $(a,b)$ that is sub-entropic for $\exp(\beta x)$.

The existence of graphs entropic with respect to $f(x) = \exp(x)$ raises the question of whether there exist $f$-entropic graphs for other functions $f$ commonly used to define centrality measures.
We resolve this question in the affirmative by exhibiting an infinite family of functions $f_i$ and graphs $G_i$ such that $G_i$ is $f_i$-entropic (Section~\ref{sec:infinite-tensor}).

Finally, we present conditions (Corollary~\ref{cor:pos-suff-condition} and Corollary~\ref{cor:walk-class-lin-sys}) that can prove the existence of a function $f$ inducing a collision at a set of walk-classes in a given graph $G$; that $G$ is $f$-entropic for some $f$; or that $G$ is sub-entropic for all $f$-subgraph centralities.
Each of these conditions can be efficiently evaluated in practice, and we use the first condition to exhibit an instance of an $f$-entropic graph with three walk-classes.
It remains an open question whether the sufficient condition in Corollary~\ref{cor:pos-suff-condition} is in fact a characterization of when nodes collide for some function $f$.

\section*{Acknowledgments}

This work supported in part by the Gordon $\&$ Betty Moore Foundations Data-Driven Discovery
Initiative through Grant GBMF4560 to Blair D. Sullivan.
The authors thank Michele Benzi for bringing walk-regular graphs to their attention.

\bibliographystyle{siamplain}
\bibliography{all-bibliography}

\begin{thebibliography}{10}

\bibitem{benzi2014note}
{\sc M.~Benzi}, {\em A note on walk entropies in graphs}, Linear Algebra and
  its Applications, 445 (2014), pp.~395--399.

\bibitem{benzi2013total}
{\sc M.~Benzi and C.~Klymko}, {\em Total communicability as a centrality
  measure}, Journal of Complex Networks, 1 (2013), pp.~124--149.

\bibitem{benzi2014matrix}
{\sc M.~Benzi and C.~Klymko}, {\em A matrix analysis of different centrality
  measures}, arXiv preprint arXiv:1312.6722,  (2014).

\bibitem{benzi2015limiting}
{\sc M.~Benzi and C.~Klymko}, {\em On the limiting behavior of
  parameter-dependent network centrality measures}, SIAM Journal on Matrix
  Analysis and Applications, 36 (2015), pp.~686--706.

\bibitem{chiue1999connectivity}
{\sc W.-S. Chiue and B.-S. Shieh}, {\em On connectivity of the cartesian
  product of two graphs}, Applied Mathematics and Computation, 102 (1999),
  pp.~129--137.

\bibitem{chung2007heat}
{\sc F.~Chung}, {\em The heat kernel as the pagerank of a graph}, Proceedings
  of the National Academy of Sciences, 104 (2007), pp.~19735--19740.

\bibitem{dehmer2008information}
{\sc M.~Dehmer}, {\em Information processing in complex networks: Graph entropy
  and information functionals}, Applied Mathematics and Computation, 201
  (2008), pp.~82--94.

\bibitem{estrada2012structure}
{\sc E.~Estrada}, {\em The structure of complex networks: theory and
  applications}, Oxford University Press, 2012.

\bibitem{estrada2013discriminant}
{\sc E.~Estrada}, {\em About the discriminant power of the subgraph centrality
  and other centrality measures}, arXiv preprint arXiv:1305.6836,  (2013).

\bibitem{estrada2014walk}
{\sc E.~Estrada, J.~A. de~la Pe{\~n}a, and N.~Hatano}, {\em Walk entropies in
  graphs}, Linear Algebra and its Applications, 443 (2014), pp.~235--244.

\bibitem{estrada2010network}
{\sc E.~Estrada and D.~J. Higham}, {\em Network properties revealed through
  matrix functions}, SIAM review, 52 (2010), pp.~696--714.

\bibitem{estrada2005subgraph}
{\sc E.~Estrada and J.~A. Rodriguez-Velazquez}, {\em Subgraph centrality in
  complex networks}, Physical Review E, 71 (2005), p.~056103.

\bibitem{Farkas1902}
{\sc J.~Farkas}, {\em Theorie der einfachen ungleichungen.}, Journal für die
  reine und angewandte Mathematik, 124 (1902), pp.~1--27,
  \url{http://eudml.org/doc/149129}.

\bibitem{gleich2015pagerank}
{\sc D.~F. Gleich}, {\em Pagerank beyond the web}, SIAM Review, 57 (2015),
  pp.~321--363.

\bibitem{godsil2013algebraic}
{\sc C.~Godsil and G.~F. Royle}, {\em Algebraic graph theory}, vol.~207,
  Springer Science \& Business Media, 2013.

\bibitem{higham2008functions}
{\sc N.~J. Higham}, {\em Functions of matrices: theory and computation}, SIAM,
  2008.

\bibitem{spiderdonuts_v1.0.0}
{\sc E.~Horton, K.~Kloster, and B.~D. Sullivan}, {\em Spider {D}onuts: Version
  1.0.0}, 2018, \url{http://dx.doi.org/10.5281/zenodo.1219253}.

\bibitem{katz1953new}
{\sc L.~Katz}, {\em A new status index derived from sociometric analysis},
  Psychometrika, 18 (1953), pp.~39--43.

\bibitem{Kloster2018115}
{\sc K.~Kloster, D.~Král', and B.~D. Sullivan}, {\em Walk entropy and
  walk-regularity}, Linear Algebra and its Applications, 546 (2018), pp.~115 --
  121.

\bibitem{newman2010networks}
{\sc M.~Newman}, {\em Networks: an introduction}, Oxford university press,
  2010.

\bibitem{page1999pagerank}
{\sc L.~Page, S.~Brin, R.~Motwani, and T.~Winograd}, {\em The pagerank citation
  ranking: Bringing order to the web.}, tech. report, Stanford InfoLab, 1999.

\bibitem{paton2017centrality}
{\sc M.~Paton, K.~Akartunali, and D.~J. Higham}, {\em Centrality analysis for
  modified lattices}, SIAM Journal on Matrix Analysis and Applications, 38
  (2017), pp.~1055--1073.

\bibitem{rodriguez2007functional}
{\sc J.~A. Rodr{\'\i}guez, E.~Estrada, and A.~Guti{\'e}rrez}, {\em Functional
  centrality in graphs}, Linear and Multilinear Algebra, 55 (2007),
  pp.~293--302.

\bibitem{weichsel1962kronecker}
{\sc P.~M. Weichsel}, {\em The kronecker product of graphs}, Proceedings of the
  {American Mathematical Society}, 13 (1962), pp.~47--52.

\end{thebibliography}

\appendix

\section{Connecting walk-class collisions and Farkas's Lemma}
In Section~\ref{sec:suff-condition} we presented a sufficient condition on the walk matrix $\mW$ of a graph $G$ for concluding that $G$ is $f$-entropic, and we observed that for the sufficient condition to hold, the linear system $\mW\vx = \ve$ must satisfy Farkas's Lemma.
Because the system we consider, $\mW\vx = \ve$, has a specific structure, we are able to give a sharper necessary condition for Farkas's Lemma to hold in this setting (Definition~\ref{def:saff}).
We use $\avg(\vv)$ to denote the average of the entries of a nonnegative vector $\vv$.
\begin{definition}\label{def:saff}
    We say that a nonnegative matrix $\mM$ satisfies the \emph{set-average flip-flop property} (SAFF) if for every pair of disjoint, non-empty subsets $S$ and $T$ of row indices, there exists a column $j$ such that $\avg(\mM(T,j)) \leq \avg(\mM(S,j))$.
    We say a graph satisfies the set-average flip-flop property if its walk-matrix $\mW$ does.
\end{definition}
Equivalently, for each $S$, there must exist some $j$ so that $\avg( \mM(S,j) ) \geq \max\{  \mM(i,j) | i \notin S \}$.
\begin{lemma}
  Let $\mM$ be a nonnegative matrix and $\ve$ the vector of all 1s.
  Then for $\mM\vx = \ve$ to have a nonnegative solution, it is necessary that
  $\mM$ satisfies the set-average flip-flop property.
\end{lemma}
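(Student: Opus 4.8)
The plan is to prove the contrapositive: assuming SAFF fails, I will exhibit a Farkas-type certificate that rules out any nonnegative solution to $\mM\vx = \ve$. Failure of SAFF means there is a pair of disjoint, non-empty row-index sets $S$ and $T$ for which \emph{every} column $j$ satisfies $\avg(\mM(T,j)) > \avg(\mM(S,j))$. Against this pair I would build the single vector $\vy$ defined by $y_i = 1/|T|$ for $i \in T$, $y_i = -1/|S|$ for $i \in S$, and $y_i = 0$ otherwise; this is exactly the candidate separating-hyperplane normal suggested by the two competing averages.

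Next I would record the two elementary computations that make $\vy$ useful. Summing the entries gives $\vy^T\ve = |T|\cdot(1/|T|) - |S|\cdot(1/|S|) = 0$, so $\vy$ is orthogonal to the right-hand side. Reading off each coordinate of $\vy^T\mM$ gives $(\vy^T\mM)_j = \avg(\mM(T,j)) - \avg(\mM(S,j))$, which is strictly positive for every $j$ precisely by the failure of SAFF. Thus $\vy^T\mM > 0$ componentwise while $\vy^T\ve = 0$.

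To close the argument I would suppose, for contradiction, that some $\vx \geq 0$ satisfies $\mM\vx = \ve$ and multiply on the left by $\vy^T$. Associativity yields $\vy^T\ve = (\vy^T\mM)\vx$. The left side is $0$, while the right side is $\sum_j (\vy^T\mM)_j\, x_j$, a sum of products of strictly positive coefficients with nonnegative $x_j$. Since $\mM\vx = \ve \neq 0$ forces $\vx$ to be nonzero, at least one $x_j > 0$, so the right side is strictly positive---contradicting the left side being $0$. Hence no nonnegative solution exists, which is the contrapositive of the claim.

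The only delicate point, and where I expect the main work to hide, is converting the \emph{weak} equality $\vy^T\ve = 0$ into the \emph{strict} inequality $\vy^T\ve < 0$ demanded by the literal statement of Farkas's Lemma. The clean route above sidesteps this by using $\mM\vx = \ve$ to guarantee $\vx \neq 0$, so that strictness is supplied by $\vy^T\mM > 0$ rather than by $\vy^T\ve$. Alternatively, one can perturb to an exact Farkas certificate: since each coordinate of $\vy^T\mM$ exceeds some $\delta > 0$ and $\mM$ is nonnegative with finitely many columns, replacing $\vy$ by $\vy - \eps\,\ve_k$ for any fixed row $k$ and sufficiently small $\eps > 0$ preserves $\vy^T\mM \geq 0$ while forcing $\vy^T\ve = -\eps < 0$; invoking Farkas's Lemma then directly contradicts solvability. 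Either phrasing confirms that SAFF is necessary for $\mM\vx = \ve$ to admit a nonnegative solution.
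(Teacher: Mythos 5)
Your proof is correct, and the core construction coincides with the paper's: from a failing pair $(S,T)$ you build a vector supported on $S\cup T$ whose product with column $j$ of $\mM$ is a difference of set-averages. The two arguments diverge only in the endgame. The paper scales the weight on one of the two sets by a factor $1+\delta$, choosing $\delta>0$ by minimizing a ratio of average-differences over columns, so that $\vy^T\ve=-\delta<0$ while $\vy^T\mM\geq 0$, and then quotes Farkas's Lemma. You instead keep the balanced vector with $\vy^T\ve=0$ and $\vy^T\mM>0$ componentwise, and finish with a direct contradiction: $0=\vy^T\ve=(\vy^T\mM)\vx$ would force $\vx=0$, impossible since $\mM\vx=\ve\neq 0$. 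Your route is marginally more elementary (Farkas's Lemma is never actually needed in your main line) and it quietly avoids a blemish in the paper's proof: the paper's $\delta$ places $\avg(\mM(T,i))$ in a denominator, and a nonnegative matrix can make that average zero, whereas your certificate never divides by a matrix entry. Your additive perturbation $\vy-\eps\,\ve_k$ is also a legitimate way to recover a literal strict Farkas certificate if one insists on invoking the lemma verbatim. Both versions establish the statement.
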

\begin{proof}
  Assume that $\mM$ does not satisfy SAFF. We will construct a vector $\vy$ such that $\vy^T\mM \geq 0$ but $\vy^T\ve < 0$; then, by Farkas's Lemma, there is no $\vx \geq 0$ such that $\mM\vx = \ve$.

  Since $\mM$ does not satisfy SAFF by assumption, there must exist disjoint, non-empty subsets $S,T$ such that for all columns $j$ we have $\avg(\mM(T,j)) \lneq \avg(\mM(S,j))$.
  Construct the vector $\vy$ as follows. Set $\vy(S) = 1/|S|$, and set $\vy(T) = -(1+\delta)/|T|$ for some $\delta > 0$ to be determined later in the proof.
  Then $\vy = \tfrac{1}{|S|}\ve_S - \tfrac{1+\delta}{|T|}\ve_T$, and we know
  $\vy^T\ve = -\delta$, so $\vy^T\ve < 0$
  as long as $\delta > 0$.

  Next we pick a specific $\delta > 0$ so that $\vy$ satisfies $\vy^T\mM \geq 0$.
  By construction, $\avg(\mM(T,j)) \lneq \avg(\mM(S,j))$ for all $j$, and by assumption $\mM$ is nonnegative, so the quantity
  \begin{equation}
    \delta = \displaystyle\min_i \left(  \frac{\avg(\mM(S,i)) - \avg(\mM(T,i))}{\avg(\mM(T,i))}   \right)
  \end{equation}
  is positive.
  With $\delta$ now defined, we will show $\vy^T\mM \geq 0$.
  Multiplying $\vy = \tfrac{1}{|S|}\ve_S - \tfrac{1+\delta}{|T|}\ve_T$ with column $j$ of $\mM$ gives
  \begin{align}
    \vy^T\mM(:,j) &= \avg(\mM(S,j)) - (1+\delta)\cdot \avg(\mM(S,j)) \nonumber \\
                  &= \avg(\mM(S,j)) - \avg(\mM(T,j)) - \delta \cdot \avg(\mM(T,j)). \label{eqn:affc-delta}
  \end{align}
  By construction of $\delta$, we know
  $\delta \leq ( \avg(\mM(S,j)) - \avg(\mM(T,j)) ) / \avg(\mM(T,j)) $,
  and so $-\delta \cdot \avg(\mM(T,j)) \geq - (\avg(\mM(S,j)) - \avg(\mM(T,j)))$.
  Substituting this in Equation~\eqref{eqn:affc-delta}, for each $j$ we have ${\vy^T\mM(:,j)\geq0}$.
  Thus, by Farkas's Lemma, no solution $\vx \geq 0 $ can exist to the equation $\mM\vx = \ve$.
\end{proof}

For Corollary~\ref{cor:pos-suff-condition} to imply that a graph $G$ is $f$-entropic, it is necessary that the walk matrix of $G$ satisfy the SAFF property.

\section{Proofs from Section~\ref{sec:beta-distribution}}

\def\thetheorem{\ref{thm:entropic-kks-general}}
\begin{theorem}
    
\end{theorem}

\begin{proof}[Proof of Theorem~\ref{thm:entropic-kks-general}]
      Given values $\beta, c, m$, let $\subscore{IS}{\beta}{c}{m}$ and $\subscore{CN}{\beta}{c}{m}$ denote the quantity $\exp(\beta \mA)_{jj}$ for independent set nodes and clique nodes in $G(c,m)$, respectively.
      We can explicitly produce expressions for $\subscore{IS}{\beta}{c}{m}$ and $\subscore{CN}{\beta}{c}{m}$ using the eigendecomposition in Table~\ref{tab:eigenvectors}.

      Observe that, because the independent set nodes have degree $m > c$, for small enough $\beta$ we know that $\subscore{IS}{\beta}{c}{m} > \subscore{CN}{\beta}{c}{m}$.
      Hence, to prove that an entropic value $\beta$ exists, by continuity it suffices to prove that $\subscore{CN}{\beta_0}{c}{m} > \subscore{IS}{\beta_0}{c}{m}$ for some $\beta_0 > 0$.
      To do this we will use the fact
      \[
          \exp(\beta\mA) = \sum_{k=1}^6 \exp(\beta \lambda_k) \mV_K \mV_k^T,
      \]
      where $\mV_k$ is composed of columns that form an eigen-basis for the eigenvalue $\lambda_k$ of $G(c,m)$.
      We remark that, because there are just two walk-classes in $G(c,m)$, the quantity $(\mV_k\mV_k^T)_{jj}$ can have only two values, depending on whether node $j$ is in a clique or the independent set.
      Replacing the quantities $(\mV_k\mV_k^T)_{jj}$ with expressions from Table~\ref{tab:eigenvectors} and Equation~\eqref{eqn:N-property} and simplifying we get the following:
      \begin{align*}
          \subscore{CN}{\beta}{c}{m} &=
                        e^{\beta\lambda_1} \tfrac{1}{c(\lambda_5^2 + m )} + e^{\beta\lambda_5} \tfrac{1}{c(\lambda_1^2 + m )}
                        + (1- \tfrac{1}{c})\left(e^{\beta\lambda_3} \tfrac{1}{(\lambda_3+1)^2 + m}
                        + e^{\beta\lambda_6} \tfrac{1}{(\lambda_6+1)^2 + m} \right),\\
                        &~~+ e^{\beta\lambda_2}\tfrac{1}{c}(1 - \tfrac{1}{m})
                      + e^{\beta\lambda_4}(1 - \tfrac{1}{c})(1 - \tfrac{1}{m}) \\
          \subscore{IS}{\beta}{c}{m} &=
              e^{\beta\lambda_1} \tfrac{\lambda_5^2 }{c(\lambda_5^2 + m)}
              + e^{\beta\lambda_5} \tfrac{\lambda_1^2 }{c(\lambda_1^2 + m)}
              + (1- \tfrac{1}{c})\left( e^{\beta\lambda_3} \tfrac{(\lambda_3+1)^2}{(\lambda_3+1)^2 + m}
              + e^{\beta\lambda_6} \tfrac{(\lambda_6+1)^2}{(\lambda_6+1)^2 + m} \right).
      \end{align*}

    Our goal is to prove $\subscore{CN}{\beta_c}{c}{m} > \subscore{IS}{\beta_c}{c}{m}$ for some $\beta_c$ for all $c$ larger than some threshold $C$.
    We proceed by splitting the above functions into pieces, which we bound independently.
    Let
    \begin{align*}
        h_1(c,m) &=  \tfrac{1}{c}\left(e^{\beta\lambda_1} \tfrac{1}{(\lambda_5^2 + m )}
                            + e^{\beta\lambda_5} \tfrac{1}{(\lambda_1^2 + m )}\right)
                  + \left(e^{\beta\lambda_2}-e^{\beta\lambda_4} - (e - 2) \right)\tfrac{1}{c}(1 - \tfrac{1}{m}), \\
        h_2(c,m) &= \left( e^{\beta \lambda_4} + \tfrac{e-2}{c}\right)(1-\tfrac{1}{m}), \\
        g_1(c,m) &= \tfrac{1}{c}\left( e^{\beta\lambda_1} \tfrac{\lambda_5^2}{\lambda_5^2 + m} + e^{\beta\lambda_5}\tfrac{\lambda_1^2}{\lambda_1^2+m} \right), \text{and} \\
        g_2(c,m) &= (1 - \tfrac{1}{c}) \left( e^{\beta\lambda_3} \tfrac{(\lambda_3+1)^2-1}{(\lambda_3+1)^2+m} + e^{\beta \lambda_6} \tfrac{(\lambda_6+1)^2-1}{(\lambda_6+1)^2+m} \right).
    \end{align*}
    Then $\subscore{CN}{\beta}{c}{m} = h_1(c,m) + h_2(c,m)$ and
    $\subscore{IS}{\beta}{c}{m} = g_1(c,m) + g_2(c,m)$, and it suffices to show
    $h_1(c,m) > g_1(c,m)$ and $h_2(c,m) > g_2(c,m)$.
    We proceed by handling these inequalities separately. We remark that, although we assume $m = c+1$ throughout, we continue to write things in terms of $c$ and $m$ for clarity.

\paragraph{Proving that $h_1 > g_1$.}
    First, note that $h_1(c,m) > g_1(c,m)$ holds if and only if $c\cdot h_1(c,m) > c \cdot g_1(c,m)$.
    Second, subtracting the smallest exponential terms, $e^{\beta\lambda_1} \tfrac{1}{(\lambda_5^2 + m )}$ and $e^{\beta\lambda_5} \tfrac{1}{(\lambda_1^2 + m )}$, from both sides, it thus suffices to show
    \begin{equation}\label{eqn:main-piece1}
      \left(e^{\beta\lambda_2}-e^{\beta\lambda_4} - (e - 2) \right)(1 - \tfrac{1}{m})
    > e^{\beta\lambda_1} \tfrac{\lambda_5^2 - 1}{\lambda_5^2 + m} + e^{\beta\lambda_5}\tfrac{\lambda_1^2 - 1}{\lambda_1^2+m}.
    \end{equation}

    Recall that the exponential satisfies $1- x < e^{-x} < 1 - \tfrac{x}{1+x}$.
    Since $\beta\lambda_2 = 1+\tfrac{1}{c-2}$ and $\lambda_4 = -1$, this implies
    $e^{\beta\lambda_2} > e(1 + \tfrac{1}{c-2})$
    and
    $-e^{\beta\lambda_4} > -(1 - \tfrac{1}{c-1}).$
    Thus, we can write
    \[
        \left(e^{\beta\lambda_2}-e^{\beta\lambda_4} - (e - 2) \right)(1 - \tfrac{1}{m}) > (\tfrac{e+1}{c-1} + 1)(1 - \tfrac{1}{m}),
    \]
    and Inequality~\eqref{eqn:main-piece1} follows if
    \begin{equation}\label{eqn:nolambdaleft}
      (\tfrac{e+1}{c-1} + 1)(1 - \tfrac{1}{m}) > e^{\beta\lambda_1} \tfrac{\lambda_5^2 - 1}{\lambda_5^2 + m} + e^{\beta\lambda_5}\tfrac{\lambda_1^2 - 1}{\lambda_1^2+m}.
    \end{equation}

    Now we upperbound the right-hand side, handling
    each term separately. Since $\lambda_5 < -1$, $e^{\beta \lambda_5} < e^{-\tfrac{1}{c-2}}$,
    we know $e^{\beta \lambda_5} < 1 - \tfrac{1}{c-1} < 1 - \tfrac{1}{m}$.
    Using the fact that $\tfrac{\lambda_1^2 - 1}{\lambda_1^2 + m} < 1$, we can write
    \[
        e^{\beta \lambda_5} \tfrac{\lambda_1^2 - 1}{\lambda_1^2 + m} < (1 - \tfrac{1}{m}).
    \]
    Next, since $m = c+1$, we know that $\lambda_1 < c+2$, therefore
    $e^{\beta \lambda_1} < e^{ 1 + \tfrac{3}{c-2}}$.

    Standard algebraic manipulation shows $\lambda_5^2 < 2$ holds,
    which implies $\lambda_5^2 < 2 m/(m-1)$.
    This allows us to show $\tfrac{\lambda_5^2-1}{\lambda_5^2 + m} < \tfrac{1}{m}$,
    and substituting above yields the inequality
    \[
        e^{\beta \lambda_1} \tfrac{\lambda_5^2-1}{\lambda_5^2 + m} < e^{1 + \tfrac{3}{c-2}}\tfrac{1}{m}.
    \]
    We can now replace Inequality~\eqref{eqn:nolambdaleft} with
    \[
        (\tfrac{e+1}{c-1} + 1)(1 - \tfrac{1}{m})
        > e^{1 + \tfrac{3}{c-2}}\tfrac{1}{m} + (1 - \tfrac{1}{m}).
    \]
    Subtracting $(1-\tfrac{1}{m})$ from both sides, we need to show
    \[
        \tfrac{e+1}{c-1}(1 - \tfrac{1}{m})
        >  e^{1 + \tfrac{3}{c-2}}\tfrac{1}{m}.
    \]
    Multiplying both sides by $c-1$, noting that $m = c+1$, and taking the limit as $c\rightarrow \infty$
    yields $e+1$ on the left and $e$ on the right, completing the proof that $h_1 > g_1$.

    \paragraph{Proving that $h_2 > g_2$.}
    Since $(1-1/m) > (1-1/c)$, it suffices to show
    \begin{equation}\label{eqn:g2h2}
      \left(e^{\beta \lambda_4} + \tfrac{e-2}{c}\right)
      >
      e^{\beta\lambda_3} \tfrac{(\lambda_3+1)^2-1}{(\lambda_3+1)^2+m} + e^{\beta \lambda_6} \tfrac{(\lambda_6+1)^2-1}{(\lambda_6+1)^2+m}.
    \end{equation}
    We begin by simplifying the fractions containing $\lambda_3$ and $\lambda_6$ and transforming the right-hand side into hyperbolic trig expressions.

    Setting $\gamma = \sqrt{4m+1}$, substitution and algebra yields the following identities
    \begin{align}
            \tfrac{(\lambda_3+1)^2}{(\lambda_3+1)^2 + m} & = \tfrac{1}{2}(1+\tfrac{1}{\gamma})
            & \textrm{and} \qquad\qquad
            \tfrac{(\lambda_6+1)^2}{(\lambda_6+1)^2 + m} & = \tfrac{1}{2}(1-\tfrac{1}{\gamma}), \label{eqn:lambda3gamma}\\
            \tfrac{1}{(\lambda_3+1)^2 + m} & = \tfrac{1}{2}\tfrac{1}{m}(1-\tfrac{1}{\gamma})
            & \textrm{and} \qquad\qquad
            \tfrac{1}{(\lambda_6+1)^2 + m} & = \tfrac{1}{2}\tfrac{1}{m}(1+\tfrac{1}{\gamma}).\label{eqn:lambda6gamma}
    \end{align}

    Substituting Equations~\eqref{eqn:lambda3gamma} and~\eqref{eqn:lambda6gamma}
    into the right-hand side of Inequality~\eqref{eqn:g2h2} and rearranging we get that $e^{\beta\lambda_3} \tfrac{(\lambda_3+1)^2-1}{(\lambda_3+1)^2+m} + e^{\beta \lambda_6} \tfrac{(\lambda_6+1)^2-1}{(\lambda_6+1)^2+m}$ is
    \begin{align*}
        =~& e^{\beta\lambda_3} \left(\tfrac{1}{2}(1+\tfrac{1}{\gamma}) - \tfrac{1}{2m}(1-\tfrac{1}{\gamma}) \right)
         + e^{\beta \lambda_6} \left(\tfrac{1}{2}(1-\tfrac{1}{\gamma}) - \tfrac{1}{2m}(1+\tfrac{1}{\gamma}) \right) \\
        =~&
        \tfrac{1}{2}\left(
            (e^{\beta\lambda_3} + e^{\beta\lambda_6})(1-\tfrac{1}{m})
            +
            (e^{\beta\lambda_3} - e^{\beta\lambda_6})\tfrac{1}{\gamma}(1+\tfrac{1}{m})
        \right).
    \end{align*}
    To transform this expression into hyperbolic trig functions,
    first observe that
    \begin{equation}\label{eqn:lambda36beta}
      \beta \lambda_3 = - \tfrac{1}{2} \beta + \tfrac{1}{2}\beta \gamma
      \quad\quad \textrm{and} \quad\quad
      \beta \lambda_6 = - \tfrac{1}{2} \beta - \tfrac{1}{2}\beta \gamma.
    \end{equation}
    Setting $\xi = \tfrac{1}{2}\beta \gamma$, we can write
    \begin{align*}
      e^{\beta\lambda_3} \tfrac{(\lambda_3+1)^2-1}{(\lambda_3+1)^2+m} + e^{\beta \lambda_6} \tfrac{(\lambda_6+1)^2-1}{(\lambda_6+1)^2+m}
      &=
      \tfrac{1}{2}e^{-\tfrac{1}{2}\beta}\left(
          (e^{\xi} + e^{-\xi})(1-\tfrac{1}{m})
          +
          (e^{\xi} - e^{-\xi})\tfrac{1}{\gamma}(1+\tfrac{1}{m}) \right) \\
      &=
         e^{-\tfrac{1}{2}\beta} \left( \cosh(\xi)(1-\tfrac{1}{m}) + \tfrac{1}{\gamma} \sinh(\xi) (1+\tfrac{1}{m}) \right).
    \end{align*}

    Thus, to show Inequality~\eqref{eqn:g2h2} it suffices to prove
    \[
        e^{-\tfrac{1}{2}\beta} + e^{\tfrac{1}{2}\beta}(\tfrac{e-2}{c})
        > \cosh(\xi)(1-\tfrac{1}{m}) + \tfrac{1}{\gamma} \sinh(\xi) (1+\tfrac{1}{m}).
    \]

    Using the standard inequality $(1+x) \leq e^{x}$, we have $e^{-\tfrac{1}{2}\beta} > (1 - \tfrac{1}{2}\beta)$ and $e^{\tfrac{1}{2}\beta} >  (1 + \tfrac{1}{2}\beta)$, so it suffices to prove that
    \[
        (1 - \tfrac{1}{2}\beta) + (1 + \tfrac{1}{2}\beta) (\tfrac{e-2}{c})
        > \cosh(\xi)(1-\tfrac{1}{m}) + \tfrac{1}{\gamma} \sinh(\xi) (1+\tfrac{1}{m}).
    \]
    We accomplish this by splitting into two inequalities as follows
    \begin{align}
        (1 - \tfrac{1}{2}\beta) + \tfrac{2}{\gamma}(1+\tfrac{1}{2}\beta)(\tfrac{e-2}{c})
        &> \cosh(\xi)(1-\tfrac{1}{m}), \label{eqn:cosh-piece} \\
        (1-\tfrac{2}{\gamma})(1+\tfrac{1}{2}\beta)(\tfrac{e-2}{c})
        &> \tfrac{1}{\gamma}\sinh(\xi)(1+\tfrac{1}{m})\label{eqn:sinhineq}.
    \end{align}
    We begin by showing~\eqref{eqn:sinhineq}.
    Multiplying by $c$ and rearranging, we have
    \[
    (1-\tfrac{2}{\gamma})(1+\tfrac{1}{2}\beta)(e-2)
    > \frac{\sinh\left(\tfrac{\sqrt{c+5/4}}{c-2}\right)}{ 2 \tfrac{\sqrt{c+5/4}}{c} } \cdot (1+\tfrac{1}{m}).
    \]
    Letting $c\rightarrow \infty$ yields $e-2$ on the left and $1/2$ on the right, since $\lim\limits_{x\rightarrow 0} \tfrac{\sinh(x)}{x} = 1$.

    It remains to show Inequality~\eqref{eqn:cosh-piece}. Recall that
    $m =  c+1$,
    $\beta = \tfrac{1}{c-2}$,
    $\gamma = \sqrt{1 + 4m} = 2 \sqrt{c + 5/4}$,
        and $\xi = \tfrac{1}{2} \beta \gamma = \tfrac{\sqrt{c+5/4}}{c-2}$.
    Then Inequality~\eqref{eqn:cosh-piece} holds if and only if
    \[
         (e-2)(1+\tfrac{1}{2}\beta)
        >
        c\cdot\gamma\cdot \left( \cosh(\xi)(1-\tfrac{1}{m}) - (1 - \tfrac{1}{2}\beta) \right).
    \]
    Taking the limit as $c\rightarrow \infty$, the left-hand side goes to $(e-2)$.
    We will show the right-hand side converges to 0.
    To see this, we rewrite the right-hand side as
    \[
        \tfrac{c}{c-2} \gamma \left( (c-2)(\cosh(\xi)-1)  - \tfrac{1}{2} \right)
        - \gamma(\cosh(\xi) - 1) + \tfrac{2\gamma}{c-2} + \tfrac{\gamma}{m} \cosh(\xi) - \tfrac{c\gamma}{(c-2)m}.
    \]
  As $c$ increases, $\xi \rightarrow 0$ and $\cosh(\xi) \rightarrow 1$;
  thus, $\tfrac{\gamma}{m} \cosh(\xi)$ vanishes, and $\gamma(\cosh(\xi) - 1)$ vanishes by the well-known trigonometric fact $\lim\limits_{x\rightarrow 0} \tfrac{\cosh(x)-1}{x} = 0$.
  The terms $\tfrac{2\gamma}{c-2}$ and $\tfrac{c\gamma}{(c-2)m}$ vanish by the power rule.

  All that remains is to show that
  $\gamma \left( (c-2)(\cosh(\xi)-1)  - \tfrac{1}{2} \right)$ converges to 0 as $c$ increases.
  Recall that $\gamma = \sqrt{4m+1} = 2(c+5/4)^{1/2}$.
  From the power series expansion of $\cosh(\xi)$, we have
 $(c+5/4)^{1/2} \left( (c-2)(\cosh(\xi) -1 ) - \tfrac{1}{2} \right)$ is
  \begin{align*}
      =~& (c+5/4)^{1/2} \left(  (c-2)\cdot\left( \sum_{k=1}^{\infty} \tfrac{1}{(2k)!} \left( \tfrac{\sqrt{c+5/4}}{c-2} \right)^{2k}  \right) - \tfrac{1}{2} \right) \\
     =~& (c+5/4)^{1/2} \left(  \left( \sum_{k=2}^{\infty} \tfrac{1}{(2k)!}
     \tfrac{(c+5/4)^{k} }{(c-2)^{2k-1}} \right) + \tfrac{1}{2}\tfrac{c+5/4}{c-2} - \tfrac{1}{2} \right) \\
     =~& \left( \sum_{k=2}^{\infty} \tfrac{1}{(2k)!}
     \tfrac{(c+5/4)^{k + 0.5} }{(c-2)^{2k-1}}\right) + \tfrac{1}{2} (c+5/4)^{1/2} \left( \tfrac{c+5/4}{c-2} - 1\right).
  \end{align*}
  For $c$ large enough, the summation can be bounded above, term for term, by the geometric series
  $\sum_{k=1} (\sqrt{c})^{-k} = \frac{c^{-1/2}}{1 - c^{-1/2}},$
  which vanishes as $c$ increases.
  Finally, observe
  \[
  \tfrac{1}{2}(c+5/4)^{1/2}\left(\tfrac{c+5/4}{c-2} - 1\right) = \tfrac{1}{2}(c+5/4)^{1/2}\cdot \tfrac{13}{4(c-2)},
  \]
  which also converges to 0 as $c$ increases.
  Thus, $h_2 > g_2$, which completes the proof.

\end{proof}

\end{document}